\newtheorem{lemma}{Lemma}
\newtheorem{corollary}{Corollary}
\begin{document}
\title{On the Secrecy Rate of In-Band Full-duplex Two-way Wiretap Channel}
\author{Navneet Garg, \IEEEmembership{Senior Member, IEEE}, 
	Haifeng Luo, \IEEEmembership{Member, IEEE},
    and Tharmalingam Ratnarajah, \IEEEmembership{Senior Member, IEEE}
\thanks{N. Grag, H. Luo, and T. Ratnarajah are with Institute for Digital Communications, School of Engineering, The University of Edinburgh, UK, e-mails: navneet.garg4@gmail.com, h.luo@ieee.org, t.ratnarajah@ieee.org \\ 
This paper was presented in part at IEEE 23rd International Workshop on Signal Processing Advances in Wireless Communication (SPAWC), Oulu, Finland, 2022 \cite{9833919}.}
}

\markboth{IEEE Transactions on Green Communications and Networking}%
{Draft}

\maketitle

\begin{abstract}
In this paper, we consider a two-way wiretap Multi-Input Multi-Output
Multi-antenna Eve (MIMOME) channel, where both nodes (Alice and Bob)
transmit and receive in an in-band full-duplex (IBFD) manner. For this system with keyless
security, we provide a novel artificial noise (AN) based signal design, where the AN is injected in both signal and null spaces. We
present an ergodic secrecy rate approximation to derive the power allocation algorithm. We consider scenarios where AN is known and unknown to legitimate users and include imperfect channel information effects. To maximize secrecy rates subject to the transmit
power constraint, a two-step power allocation solution is proposed,
where the first step is known at Eve, and the second step helps to improve
the secrecy further. We also consider scenarios where partial information is known by Eve and the effects of non-ideal self-interference cancellation. The usefulness
and limitations of the resulting power allocation solution are analyzed and verified
via simulations. Results show that secrecy rates are less
when AN is unknown to receivers or Eve has more information about legitimate users. Since the ergodic approximation only
considers Eve's distance, the resulting power allocation
provides secrecy rates close to the actual ones. 
\end{abstract}

\begin{IEEEkeywords}
    Artificial noise, in-band full-duplex, power allocation, secrecy rate
\end{IEEEkeywords}

\IEEEpeerreviewmaketitle
\section{Introduction}
\label{sec:intro}
\IEEEPARstart{I}{n}formation security in wireless systems is essential in today's world
as quick information exchange
is influencing lives. Among several key issues in the security of wireless
networks, sharing secret keys is most common in cryptography. However,
establishing any secret key between two legitimate users is not trivial.
The lifetime of a secret key is shortened each time the key is used.
For many applications, such as big data streaming, periodic renewal
is required for secret keys. Therefore, generating secret keys and
transmitting secure information using a wireless medium are the two
parts of physical layer security \cite{8883079,sarkar2012enhancing}. Based on that, there
are two types of secrecy methods in the literature, namely, key-based
and keyless. This paper investigates the keyless transmission of
secret information between two users using artificial noise (AN)
\cite{7505974,7247765}. Specifically, two legitimate users (Alice and Bob) are considered to exchange secret information while being
eavesdropped by Eve. When users have multiple antennas,
this setup is called Multi-Input Multi-Output Multi-antenna
Eve (MIMOME) channel. 

In \cite{YANG2019100730,8709756}, with in-band full-duplex (IBFD) radios at the legitimate receiver, the secrecy rate is maximized for a one-way wiretap channel using block coordinate descent method with respect to the transmit
covariance matrix. In this one-way wiretap channel, the IBFD node transmits
jamming signals to reduce Eve's rates. In \cite{7925740,8883079} in addition to jamming signals, artificial noise is also transmitted
from Alice, and the tight lower bound on the achievable ergodic
secrecy rate is derived. In \cite{8355793}, Bob transmits secure pilot symbols
to estimate the self-interference (SI) channel and make Eve unable to acquire
Bob-to-Eve channel estimates (jamming channel). A similar scenario
with multiple randomly located Eve is analyzed using Poisson point
process in \cite{9309259}. Also, with multiple Bobs and Eves, the power
allocation problem is tacked in \cite{7792199}. In \cite{8265212}
to improve achievable secrecy degrees of freedom (sDoFs), the antenna
assignment problem at Bob is investigated with SI present,
and a feasibility condition is derived. 

In a two-way wiretap channel, where Alice and Bob operate
in an IBFD manner, the loss of secrecy rates for feedback is investigated
for a single antenna scenario in \cite{5074582}. Transmit covariance
matrix is optimized for robust precoders with perfect CSI in \cite{6808487} and imperfect CSI with bounded uncertainty in \cite{7842236,8288039,7015632,7414075}.
A similar design is adopted for simultaneous wireless information
and power transfer (SWIPT) systems toward weighted sum rate maximization.
Secrecy region investigation in \cite{6600802} reveals that randomized
scheduling can improve secrecy rates, assuming only Eve's location
information. One-sided secrecy of two-way channels is analyzed for
adder and binary XOR channels in \cite{7840500}, and it shows that channel prefixing
can improve secrecy rates. In \cite{7945480}, a close form approximation
for ergodic secrecy rate and sDoFs is derived in a single antenna scenario. In \cite{7942069}, the secrecy outage probability is studied and the maximum tolerable number of eavesdroppers is given. There are works considering a two-way IBFD relay, such as \cite{6730702}; however, they are different from the two-way wiretap scenario, where two nodes operate in IBFD manner. In our previous work, we have derived expressions for two-way power
allocation \cite{9833919}, which can be considered as a special case
of this work.

In this paper, we consider a general two-way MIMOME wiretap channel,
that is, a one-way wiretap channel can be derived assuming zero desired
signal transmit power at Bob. Other features that are considered separately
in above works are also included together in this work, such as imperfect
CSI, imperfect self-interference cancellation, multiple antennas at
each node, only Eve's location is known at Alice and Bob, secrecy region
details, ergodic secrecy rate  approximation and its maximization,
and fast algorithmic solution. With all these considerations, we derive a general power allocation algorithm to maximize the secrecy rates of the two-way MIMOME wiretap channel. Moreover, in literature, artificial
noise is only injected into the null space, while we give formulation with artificial
noise in both signal and null spaces, including precise stream-wise power controls.
Details of the contributions are given as follows.

\subsection{Contributions}
\begin{itemize}
      \item AN-based signal design: In contrast to works in literature, artificial noise is added in both
      the signal and null spaces. Moreover, we
      consider stream-wise power control in both signal and AN directions.
      This not only allows better power allocation to increase Alice-to-Bob
      rates, but also improves the secrecy rates since this fine stream-wise
      allocation is unknown to Eve. 
      \item Secrecy rate approximation and secrecy regions with partial precoder
      knowledge at Eve: Since Alice and Bob are only aware of Eve's distance information,
      the ergodic rates at Eve are approximated, and the corresponding limitations
      are analyzed in simulations. It can be noted that the imperfect channel
      estimation is considered by Alice and Bob, and thus, Eve also does
      not have perfect legitimate channel knowledge. This information is
      quantified, and its effects are also analyzed with the help of chordal
      distance decomposition. Secrecy regions for both Alice and Bob are
      plotted for different cases to gain further insights. 
      \item Power allocation with and without AN knowledge: The power allocation problem is formulated to maximize sum ergodic rates,
      and a two-step solution is provided. The first coarse allocation step
      is based on ergodic rate approximation, which includes large-scale fading
      and system parameters. For considering a practical scenario, we assume
      the outcome of this step is also known by Eve. However, the second
      fine allocation step is based on small-scale channel fading matrix
      between Alice and Bob, making this step unknown to Eve. The power
      allocation solution is obtained separately for both cases when AN
      is known at Alice/Bob and when not. Successive convex approximation
      provides a gradient descent method for the known AN case, whereas Newton's
      second-order method for the unknown AN case. These methods are fast
      and converge in a few iterations ($<20$ for present simulations). 
      \item Simulations with imperfect CSI and imperfect SIC: Simulation verifies the usefulness of the power allocation to improve secrecy
      rates. Simulations are shown for sum secrecy rates with respect to
      transmit power, partial information at Eve, power allocation factors,
      Eve's location, residual self-interference strength, partial information
      content at Eve, and Eve rates for regulating AN in precoder and its
      null directions. Partial information at Eve certainly improves the
      secrecy rates, whereas unknown AN reduces the same. 
\end{itemize}

\subsection{Orgnization}
The rest of the paper is organized as follows: Section \ref{sec:preliminaries} describes the details of the two-way
wiretap system. The proposed power allocation and its analysis are
described in Section \ref{sec:Two-step-secrecy-power}. Simulation
results are demonstrated in Section \ref{sec:Simulation-Results},
followed by the conclusion in Section \ref{sec:Conclusion}. 

\subsection{Notations}
Scalars, vectors, and matrices are represented by lower case ($a$),
lower case bold face ($\mathbf{a}$) and upper case bold face ($\mathbf{A}$)
letters, respectively. Conjugate, transpose, Hermitian transpose,
element-wise Hadamard product, and Kronecker product of matrices are
denoted by $(\cdot)^{*}$, $(\cdot)^{T}$, $(\cdot)^{\dagger}$, $\odot$
and $\otimes$, respectively. $\mathcal{CN}(\mathbf{\mu},\mathbf{R})$
represents a circularly symmetric complex Gaussian random vector with
mean $\mathbf{\mu}$ and covariance matrix $\mathbf{R}$. The notations
$\|\cdot\|_{2}$ and $\|\cdot\|_{F}$ denote the $l_{2}$ norm and
Frobenious norm, respectively. The notation $\text{vec}(\mathbf{X})$
denotes the vector obtained by stacking the columns of matrix $\mathbf{X}$.
$\mathcal{D}(a_{1},a_{2})$ and $\mathcal{BD}(\mathbf{A}_{1},\mathbf{A}_{2})$
denote a block diagonal matrix with respectively scalars $a_{1},a_{2}$
and matrices $\mathbf{A}_{1},\mathbf{A}_{2}$ as its diagonal components.
The notation $\mathbf{A}(1:d)$ signifies the matrix constructed
from the first $d$-columns of $\mathbf{A}$.

\section{System Model}
\label{sec:preliminaries}
We consider a two-way wiretap channel, as shown in Figure \ref{fig:Two-way-wiretap-channel-1},
where Alice (with $N_{A}$ antennas) intends to send secret information
over a wireless channel to Bob (with $N_{B}$ antennas) in the presence
of a passive eavesdropper (called Eve having $N_{E}$ antennas), which is also known as MIMOME system. 
\begin{figure}
	\centering \includegraphics[width=1.0\columnwidth]{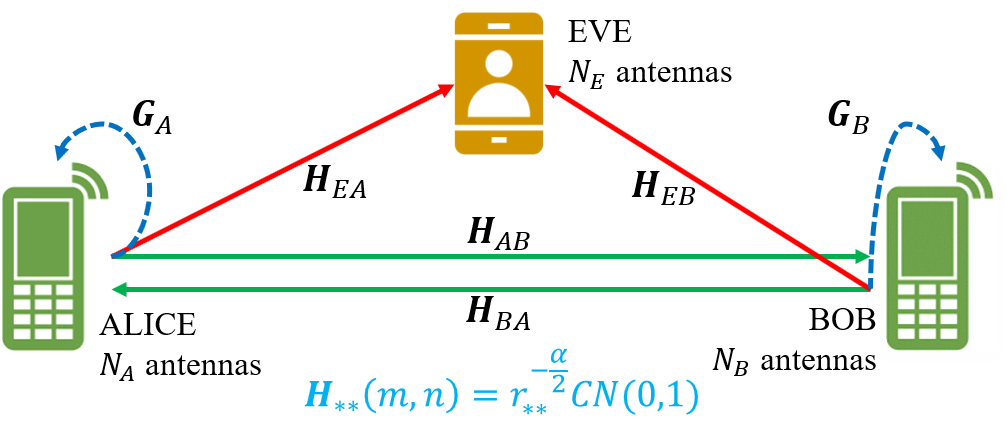}
	
	\caption{Two-way wiretap channel model with Alice and Bob operating in IBFD manner.\label{fig:Two-way-wiretap-channel-1}}
\end{figure}

\subsection{Channel model}
\subsubsection{Legitimate and eavesdropping channels}
We consider half-wavelength antenna arrays so that the entries of the MIMO channel matrix are independent of each other \cite{8794743}. Entries of true channel matrix following Rayleigh fading from the
node $i$ to the node $j$, which are given as 
\begin{equation}
	\mathbf{H}_{ji}(m,n)\sim\mathcal{CN}\left(0,\beta_{ji}\right),~i\neq j,
\end{equation}
where $i\in\left\{ A,B\right\} ,j\in\left\{ A,B,E\right\} ,$ $m=1,\ldots,N_{j}$;
$n=1,\ldots,N_{i}$; $\beta_{ji}=r_{ji}^{-\alpha}$; $r_{ji}$ is the distance between the two nodes; $\alpha$ is the path loss exponent. Channel reciprocity
is assumed to be present between true channels, that is, $\mathbf{H}_{AB}=\mathbf{H}_{AB}^{H}$. The true channel reciprocity assumption is reasonable due to the electromagnetic foundations \cite{1580281} and is used in most related studies (see \cite{luo2023channel} and references therein). However, legitimate users cannot fully utilize the channel reciprocity due to practical imperfections. Accurate channel estimation is challenging due to limited training resources and non-symmetric hardware impairments at the transmitter and receiver, resulting in channel uncertainty in the obtained channel state information (CSI). Let $\hat{\mathbf{H}}_{ji}$ denote the estimate associated with the actual wireless channel $\mathbf{H}_{ji}$, they are related as \cite{9374975}
\begin{equation}
	\mathbf{H}_{ji}=\hat{\mathbf{H}}_{ji}+\Delta_{ji},
\end{equation}
where $\Delta_{ji}$ denotes the random channel uncertainty matrix. In this paper, we adapt the stochastic error model, which describes the channel uncertainty by the complex Gaussian model, i.e., entries of the channel uncertainty matrix are given as $\mathbf{\Delta}_{ji}(m,n) \sim \mathcal{CN} \left( 0, \sigma_{\Delta,ji}^{2} \right)~\forall~m,n$. For the worst case scenario, it is assumed that Eve has the perfect knowledge of its matrices, $\mathbf{H}_{EA}$ and $\mathbf{H}_{EB}$.
On the other hand, Alice and Bob are only aware of Eve's location,
i.e., $\beta_{EA}$ and $\beta_{EB}$.

\subsubsection{Residual self-interference channels}
Since both Alice and Bob are operating in IBFD manner, self-interference cancellation (SIC) has to be applied before any effective IBFD operations. The implementation of the SIC is out of the scope of this paper, while the readers are referred to our previous work in \cite{luo2023channel} for the implementation and detailed effects of SIC with antenna arrays. With the same number of receive antennas as the transmit ones, the residual
self-interference channels (RSICs) after three-stage (passive, analog
and digital) cancellations are denoted by $\mathbf{G}_{A}\in\mathbb{C}^{N_{A}\times N_{A}}$ and $\mathbf{G}_{B}\in\mathbb{C}^{N_{B}\times N_{B}}$, respectively, where entries of $\mathbf{G}_{A}$ and $\mathbf{G}_{B}$ are assumed independent and identically distributed (i.i.d.) zero mean Gaussian distributed with the variance $\eta$. After three stage cancellation, the strength of RSICs is close to the noise variance, i.e., $\eta \approx \sigma^{2}$. 

\subsection{Precoding and transmitted signals}
Alice and Bob transmit the $b$ secret data streams precoded and mixed
with artificial noise as 
\begin{align}
	\mathbf{x}_{i} & =\mathbf{V}_{i}\left(\mathbf{P}_{s,i}^{\frac{1}{2}}\mathbf{s}_{i}+\mathbf{P}_{w,i}^{\frac{1}{2}}\mathbf{w}_{i}\right)+\underline{\mathbf{V}}_{i}\mathbf{P}_{\underline{w},i}^{\frac{1}{2}}\underline{\mathbf{w}}_{i},\quad i\in\left\{ A,B\right\} ,
\end{align}
where $\mathbf{V}_{i}$ is a $N_{i}\times b$ orthonormal precoder
matrix such that $\mathbf{V}_{i}^{H}\mathbf{V}_{i}=\mathbf{I}_{b}$ \cite{cirik2016beamforming};
$\underline{\mathbf{V}}_{i}=\text{null}\left(\mathbf{V}_{i}\right)\in\mathbb{C}^{N_{i}\times n_{i}}$
with $n_{i}=N_{i}-b$ such that $\underline{\mathbf{V}}_{i}^{H}\underline{\mathbf{V}}_{i}=\mathbf{I}_{n_{i}}$
and $\mathbf{V}_{i}^{H}\underline{\mathbf{V}}_{i}=\mathbf{0}$; the
data symbol vector $\mathbf{s}_{i}\in\mathbb{C}^{b\times1}$ denotes zero
mean symbols with the covariance matrix $\mathbb{E}\left\{ \mathbf{s}_{i}\mathbf{s}_{i}^{H}\right\} =\mathbf{I}_{b}$;
vectors $\mathbf{w}_{i}$ and $\underline{\mathbf{w}}_{i}$ represent
zero mean artificial noise with covariance matrices, $\mathbb{E}\left\{ \mathbf{w}_{i}\mathbf{w}_{i}^{H}\right\} =\mathbf{I}_{b}$,
$\mathbb{E}\left\{ \underline{\mathbf{w}}_{i}\underline{\mathbf{w}}_{i}^{H}\right\} =\mathbf{I}_{n_{i}}$;
the matrices $\mathbf{P}_{s,i}$, $\mathbf{P}_{w,i}$, and $\mathbf{P}_{\underline{w},i}$
are power allocation diagonal matrices with non-negative real entries
at their diagonals. It is worth noting that injecting artificial noise into the signal space ($\mathbf{w}_i$) seems to be a counter common sense behavior, as it can degrade the quality of legitimate communication. However, it may at the same time cause more serious effects to the eavesdropper than the noise inserted into the null space ($\underline{\mathbf{w}}_i$). Therefore, the introduction of $\underline{\mathbf{w}}_i$ provides more possibilities for system security design. When the introduction of $\underline{\mathbf{w}}_i$ is counterproductive, our power allocation policy adaptively reduces its power budget. When $\mathbf{P}_{\underline{w},i} = \mathbf{0}$, this noise does not exist. At the $i^{th}$ transmitter ($i=A,B$), the transmit
covariance matrix can be given as 
\begin{equation}
	\mathbf{T}_{i}=\mathbb{E}\left\{ \mathbf{x}_{i}\mathbf{x}_{i}^{H}\right\} =\mathbf{V}_{i}\left(\mathbf{P}_{s,i}+\mathbf{P}_{w,i}\right)\mathbf{V}_{i}^{H}+\underline{\mathbf{V}}_{i}\mathbf{P}_{\underline{w},i}\underline{\mathbf{V}}_{i}^{H},
\end{equation}
where for the given power constraint $tr\left ( \mathbb{E}\left \{ \mathbf{x}_{i}\mathbf{x}_{i}^{H} \right \} \right ) \leq P_{i}$,
we have the constraints
\begin{equation}
	tr\left(\mathbf{P}_{s,i}+\mathbf{P}_{w,i}\right)+tr\left(\mathbf{P}_{\underline{w},i}\right)\leq P_{i},\forall i=A,B.\label{eq:txcon1}
\end{equation}

\subsection{Eve's knowledge about the precoder}
Note that precoders $\mathbf{V}_{A}$ and $\mathbf{V}_{B}$ are chosen
from the right singular vectors of channel estimates $\hat{\mathbf{H}}_{BA}$
and $\hat{\mathbf{H}}_{AB}$, respectively. Since the channel reciprocity
holds, both Alice and Bob are aware of both $\mathbf{V}_{A}$ and
$\mathbf{V}_{B}$. However, Eve does not have the channel knowledge
of either $\hat{\mathbf{H}}_{BA}$ or $\hat{\mathbf{H}}_{AB}$, we
assume Eve acquires these channel or precoder values with some accuracy.
Let $\hat{\mathbf{V}}_{AE}$ and $\hat{\mathbf{V}}_{BE}$ represent
the precoders at Eve. The closeness of these precoders to actual precoders can be obtained using the chordal distance decomposition as
\begin{equation}
	\mathbf{V}_{i}=\hat{\mathbf{V}}_{iE}\mathbf{Q}_{i}\mathbf{\Lambda}_{i}+\underline{\hat{\mathbf{V}}}_{iE}\underline{\mathbf{Q}}_{i}\underline{\mathbf{\Lambda}}_{i},~ i\in\left\{ A,B\right\}  ,
\end{equation}
where $d_{i}=\texttt{d}_c^{2}\left(\hat{\mathbf{V}}_{iE},\mathbf{V}_{i}\right)=tr(\mathbf{\Lambda}_{i}\mathbf{\Lambda}_{i}^{H}) = b - tr(\underline{\mathbf{\Lambda}}_{i}\underline{\mathbf{\Lambda}}_{i}^{H})$ is the chordal distance; $\mathbf{\Lambda}_{i}$ and $\underline{\mathbf{\Lambda}}_{i}$ are upper triangular matrix; $\underline{\hat{\mathbf{V}}}_{iE}\in\text{null}\left(\hat{\mathbf{V}}_{iE}\right)$; and $\mathbf{Q}_{i}\in\mathbb{C}^{d\times d}$, $\underline{\mathbf{Q}}_{i}\in\mathbb{C}^{n_{i}\times d}$ are orthonormal matrices. For perfect precoder information at Eve, the value $d_{i}=0$ can be substituted. The closeness of these precoders to the actual ones can be measured with the difference $\mathbf{\Delta}_{\mathbf{V},iE}=\mathbf{V}_{i}-\hat{\mathbf{V}}_{i,E}$
and the corresponding norm as 
\begin{align*}
	& \kappa_{i,E}=\left\Vert \mathbf{\Delta}_{\mathbf{V},iE} \right\Vert _{F}^{2}=tr\left[\left(\mathbf{V}_{i}-\hat{\mathbf{V}}_{i,E}\right)^{H}\left(\mathbf{V}_{i}-\hat{\mathbf{V}}_{i,E}\right)\right]\\
	& =tr\left( \mathbf{V}_{i}^{H}\mathbf{V}_{i} \right)+tr\left( \hat{\mathbf{V}}_{i,E}^{H}\hat{\mathbf{V}}_{i,E}\right)-2tr \left ( \Re \left \{ \hat{\mathbf{V}}_{i,E}^{H}\mathbf{V}_{i} \right \} \right ) \\
	& =2b-2tr \left ( \Re \left \{ \hat{\mathbf{V}}_{i,E}^{H}\mathbf{V}_{i} \right \} \right )\\
	& =2b\left(1-\frac{1}{b}tr \left ( \Re \left \{ \hat{\mathbf{V}}_{i,E}^{H}\mathbf{V}_{i} \right \} \right ) \right),
\end{align*}
where $\kappa_{i,E}=0$ implies the precoder $\mathbf{V}_{i}$ is
perfectly known at Eve, whereas the value $\kappa_{i,E}=2d$ shows
Eve selects completely orthogonal precoder of the matrix $\mathbf{V}_{i}$.
In simulations, the imperfect precoder at Eve is generated as follows.

Given $\kappa_{i,E}$ value, we compute its relation to the chordal
distance. Using chordal distance decomposition \cite{9548957,9447191},
$\hat{\mathbf{V}}_{i,E}$ and $\mathbf{V}_{i}$ have the relation
as 
\begin{equation}
	\hat{\mathbf{V}}_{i,E}=\mathbf{V}_{i}\sqrt{1-\frac{d_{i}}{b}}+\mathbf{V}_{i}^{\text{null}}\sqrt{\frac{d_{i}}{b}},\label{eq:CDdecomp}
\end{equation}
where $\mathbf{V}_{i}^{\text{null}}$ spans of the null-space of $\mathbf{V}_{i}$
such that $\mathbf{V}_{i}^{H}\mathbf{V}_{i}^{\text{null}}=\mathbf{0}$.
On multiplying with $\mathbf{V}_{i}^{H}$ gives 
\begin{gather}
	\mathbf{V}_{i}^{H}\hat{\mathbf{V}}_{i,E}=\mathbf{V}_{i}^{H}\mathbf{V}_{i}\sqrt{1-\frac{d_{i}}{b}}\\
	\kappa_{i,E}=2d-2d\sqrt{1-\frac{d_{i}}{b}},
\end{gather}
which provides 
\begin{equation}
	d_{i}=b\left[1-\left(1-\frac{\kappa_{i,E}}{2b}\right)^{2}\right].
\end{equation}
Therefore, given $\kappa_{i,E}$, compute $d_{i}$, and use equation
(\ref{eq:CDdecomp}) to obtain Eve's precoder $\hat{\mathbf{V}}_{i,E}$.

\subsection{Received signals}
The received signals at Alice, Bob and Eve can be given as 
\begin{align}
	\mathbf{y}_{A} & =\mathbf{H}_{AB}\mathbf{x}_{B}+\mathbf{G}_{A}\mathbf{x}_{A}+\mathbf{z}_{A},\\
	\mathbf{y}_{B} & =\mathbf{G}_{B}\mathbf{x}_{B}+\mathbf{H}_{BA}\mathbf{x}_{A}+\mathbf{z}_{B},\\
	\mathbf{y}_{E} & =\mathbf{H}_{EA}\mathbf{x}_{A}+\mathbf{H}_{EB}\mathbf{x}_{B}+\mathbf{z}_{E} =\underbrace{\left[\mathbf{H}_{EA},\mathbf{H}_{EB}\right]}_{\mathbf{H}_{E}}\underbrace{\left[\begin{array}{c}
			\mathbf{x}_{A}\\
			\mathbf{x}_{B}
		\end{array}\right]}_{\mathbf{x}_E}+\mathbf{z}_{E},
\end{align}
where noise vectors are Gaussian $\mathbf{z}_{i}\sim\mathcal{CN}\left(\mathbf{0},\sigma^{2}\mathbf{I}\right),\forall i=A,B,E$;
and
\begin{equation}
	\mathbf{x}_E =\mathbf{V}_E \left(\mathbf{P}_{s,E}^{\frac{1}{2}}\mathbf{s}_E+\mathbf{P}_{w,E}^{\frac{1}{2}}\mathbf{w}_E \right)+\underline{\mathbf{V}}_E \mathbf{P}_{\underline{w},E}^{\frac{1}{2}}\underline{\mathbf{w}}_E ,
\end{equation}
with $\mathbf{V}_E =\mathcal{BD}\left(\mathbf{V}_{A},\mathbf{V}_{B}\right)$,
$\mathbf{P}_{v, E}=\mathcal{BD}\left(\mathbf{P}_{v,A},\mathbf{P}_{v,B}\right)$,
$v \in\left\{ s,w,\underline{w}\right\} $ and $\mathbf{v}^{T}=\left[\mathbf{v}_{A}^{T},\mathbf{v}_{B}^{T}\right]$, where
$\mathbf{v}\in\left\{ \mathbf{s},\mathbf{w},\underline{\mathbf{w}}\right\} $.
The covariance matrix of $\mathbf{x}_E$ can be written as $\mathbf{T}_E =\mathcal{BD}\left(\mathbf{T}_{A},\mathbf{T}_{B}\right),$
since the transmitted signals from Alice and Bob ($\mathbf{x}_{A}$
and $\mathbf{x}_{B}$) are uncorrelated, that is, $\mathbb{E}\left\{ \mathbf{x}_{A}\mathbf{x}_{B}^{H}\right\} =\mathbf{0}.$

For the received signals, we define the covariance matrices as 
\begin{align}
	& \mathbf{R}_{y,A}=\mathbb{E}\left\{ \mathbf{y}_{A}\mathbf{y}_{A}^{H}\right\} =\hat{\mathbf{H}}_{AB}\mathbf{T}_{B}\hat{\mathbf{H}}_{AB}^{H}+\mathbf{G}_{A}\mathbf{T}_{A}\mathbf{G}_{A}^{H} \nonumber \\
	& \;\;\;\;\;\;\;\;\;\;\;\;\;\;\;\;\;\;\;\;\;\;\;\;\;\;\;\;\;\;\;\;\;\;\;\; +\left(\sigma_{\Delta,AB}^{2}tr(\mathbf{T}_{B})+\sigma^{2}\right)\mathbf{I}, \\
	& \mathbf{R}_{y,B}=\mathbb{E}\left\{ \mathbf{y}_{B}\mathbf{y}_{B}^{H}\right\}  =\mathbf{G}_{B}\mathbf{T}_{B}\mathbf{G}_{B}^{H}+\hat{\mathbf{H}}_{BA}\mathbf{T}_{A}\hat{\mathbf{H}}_{BA}^{H} \nonumber \\
	& \;\;\;\;\;\;\;\;\;\;\;\;\;\;\;\;\;\;\;\;\;\;\;\;\;\;\;\;\;\;\;\;\;\;\;\; +\left(\sigma_{\Delta,BA}^{2}tr(\mathbf{T}_{A})+\sigma^{2}\right)\mathbf{I}, \\
	& \mathbf{R}_{y,E}=\mathbb{E}\left\{ \mathbf{y}_{E}\mathbf{y}_{E}^{H}\right\} =\mathbf{H}_{E}\mathbf{T}_E\mathbf{H}_{E}^{H}+\sigma^{2}\mathbf{I}.
\end{align}

\subsection{Secrecy rates \label{subsec:secrecy_rate}}
MMSE receiver with successive interference cancellation can achieve the capacity of the MIMO channel
\cite{tse2005fundamentals}, which refers to the maximum achievable information transmission rate. Nevertheless, maximum likelihood-based symbol detection can be applied later to provide the optimal data recovery. For Alice and Bob, they apply the three-stage SIC first and then employ MMSE combing to he received signal, and their information rates can be given as 
\begin{align}
	& R_{BA}=\log_{2}\left|\mathbf{I}+\hat{\mathbf{H}}_{BA}\mathbf{V}_{A}\mathbf{P}_{s,A}\mathbf{V}_{A}^{H}\hat{\mathbf{H}}_{BA}^{H}\mathbf{C}_{BA}^{-1}\right|,\\
	& R_{AB}=\log_{2}\left|\mathbf{I}+\hat{\mathbf{H}}_{AB}\mathbf{V}_{B}\mathbf{P}_{s,B}\mathbf{V}_{B}^{H}\hat{\mathbf{H}}_{AB}^{H}\mathbf{C}_{AB}^{-1}\right|,
\end{align}
where covariance matrices can be written as 
\begin{align}
	& \mathbf{C}_{BA}=\mathbf{G}_{B}\mathbf{T}_{B}\mathbf{G}_{B}^{H}+\theta\hat{\mathbf{H}}_{BA}\underline{\mathbf{V}}_{A}\mathbf{P}_{\underline{w},A}\underline{\mathbf{V}}_{A}^{H}\hat{\mathbf{H}}_{BA}^{H} \nonumber \\
	& \qquad \quad +\theta\hat{\mathbf{H}}_{BA}\mathbf{V}_{A}\mathbf{P}_{w,A}\mathbf{V}_{A}^{H}\hat{\mathbf{H}}_{BA}^{H}+\left(\sigma_{\Delta,BA}^{2}P_{A}+\sigma^{2}\right)\mathbf{I},\\
	& \mathbf{C}_{AB}=\mathbf{G}_{A}\mathbf{T}_{A}\mathbf{G}_{A}^{H}+\theta\hat{\mathbf{H}}_{AB}\underline{\mathbf{V}}_{B}\mathbf{P}_{\underline{w},B}\underline{\mathbf{V}}_{B}^{H}\hat{\mathbf{H}}_{AB}^{H} \nonumber \\
	& \qquad \quad +\theta\hat{\mathbf{H}}_{AB}\mathbf{V}_{B}\mathbf{P}_{w,B}\mathbf{V}_{B}^{H}\hat{\mathbf{H}}_{AB}^{H}+\left(\sigma_{\Delta,AB}^{2}P_{B}+\sigma^{2}\right)\mathbf{I}.
\end{align}
The variable
$\theta$ is introduced to represent the cases when the artificial
noise is known ($\theta=0$), and unknown ($\theta=1$) between Alice
and Bob. For example, if artificial noise transmitted from Alice is
known at Bob, Bob can subtract the same to get better data estimates.
For Eve, she receives signals from both Alice and Bob due to the IBFD operation. We assume Eve is closer to Alice without loss of generality as the one that is closer to the eavesdropper can be regarded as Alice in practice. Thus, Eve processes the signal from Alice first, and then cancels the interference from Alice and processes the signal from Bob. Based on the successive cancellation approach at the receiver, the Eve's rate can be divided into two parts $R_{E}=R_{EA}+R_{EB}$
with 
\begin{align}
	& R_{EA}=\log_{2}\left|\mathbf{I}+\mathbf{H}_{EA}\hat{\mathbf{V}}_{A,E}\mathbf{P}_{s,A}\hat{\mathbf{V}}_{A,E}^{H}\mathbf{H}_{EA}^{H}\mathbf{C}_{E}^{-1}\right|,\label{eq:REA}\\
	& R_{EB}=\log_{2}\Big|\mathbf{I}+\mathbf{H}_{EB}\hat{\mathbf{V}}_{B,E}\mathbf{P}_{s,B}\hat{\mathbf{V}}_{B,E}^{H}\mathbf{H}_{EB}^{H}\\
	& \qquad\times\left(\mathbf{H}_{EA}\hat{\mathbf{V}}_{A,E}\mathbf{P}_{s,A}\hat{\mathbf{V}}_{A,E}^{H}\mathbf{H}_{EA}^{H}+\mathbf{C}_{E}\right)^{-1}\Big|,
\end{align}
where the covariance matrix is given as
\begin{align}
     \mathbf{C}_{E}=\mathbf{H}_{E}\left(\mathbf{T}_E-\hat{\mathbf{V}}_{E}\mathbf{P}_{s,E}\hat{\mathbf{V}}_{E}^{H}\right)\mathbf{H}_{E}^{H}+\sigma^{2}\mathbf{I}.
\end{align}
It can be noted that for Eve, the available precoder $\hat{\mathbf{V}}_{E}$
is substituted for the actual precoder $\mathbf{V}$. 

For a given CSI instance, the true secrecy rates can be defined as
\begin{align}
	R_{Si}=\max\left \{ 0,R_{ji}-R_{Ei} \right \} ,
\end{align}
where $i\in\left\{ A,B\right\}$ and $j\in\left\{ A,B\right\} \setminus\left\{ i\right\} $. IBFD radios have the potential to improve the secrecy rate in two factors.
\begin{itemize}
	\item IBFD radios can enable simultaneous probing between legitimate users so that they can harness a better channel reciprocity, which could be reflected by a smaller channel uncertainty.
	\item Eve will be affected by each other when receiving transmissions from Alice and Bob at the same time, so as to reduce its achievable eavesdropping data rate.
\end{itemize}

\section{Two-step secrecy power allocation \label{sec:Two-step-secrecy-power}}
In this section, we derive the power allocation for Alice and
Bob to maximize sum secrecy rates. However, it should be noted that Alice and Bob are not able to calculate the secrecy rate due to the lack of Eve's knowledge, e.g., CSI of the eavesdropping channel, the precoder Eve utilized for calculation, etc. To derive the optimal power allocation policy, we assume legitimate users are aware of the presence of Eve and can acquire her location, so they have statistical information about the eavesdropping channel, i.e., path loss $\beta_{EA}$ and $\beta_{EB}$. This assumption is feasible with the development of integrated sensing and communication (ISAC)\cite{9847100}, which gives the terminal devices the human-like cognitive ability to construct a dynamic topology of the surrounding environment as in \cite{9557830}. We can reasonably assume that legitimate users are aware of their surroundings and can therefore locate suspected eavesdroppers from the constructed dynamic topology. In contrast, we do not limit Eve's ability and assume she has the knowledge of $\hat{\mathbf{H}}_{AB}$
with uncertainty. Therefore, we solve the above in two steps. First,
we obtain the coarse power allocation based on ergodic rate approximations.
Subsequently, the scalar power values are distributed for matrix power
allocation based on the small scale fading $\hat{\mathbf{H}}_{BA}$
and $\hat{\mathbf{H}}_{AB}$, that is, a fine power allocation step.
It can be noted that Eve is only aware of coarse power values, since
they are based on large-scale fading components and can be computed
by Eve. 

\subsection{Rate approximation}
The optimization problem in order to maximize sum ergodic secrecy
rates subject to transmit power constraints at Alice and Bob can be
cast as 
\begin{align}
	\label{eq:optSR}
	& \max_{\mathbf{P}_{v,i},\forall v,i}  \mathbb{E}\left\{ R_{BA}-R_{EA}+R_{AB}-R_{EB}\right\} \\
	\text{s.t.} ~& tr\left(\mathbf{P}_{s,i}+\mathbf{P}_{w,i}\right)+tr\left(\mathbf{P}_{\underline{w},i}\right)\leq P_{i},\forall i=A,B,\nonumber 
\end{align}
where $v\in\left\{ s,w,\underline{w}\right\} $ and $i\in\left\{ A,B\right\} $. Note that we aim to maximize the sum secrecy rates without restricting the individual secrecy rate to be equal or greater than 0 (i.e., $R_{sA} = R_{BA} - R_{EA}\geq 0$ and  $R_{sB} = R_{AB} - R_{EB}\geq 0$) to obtain the optimal solution. For instance, there could be a situation where there is a power allocation policy that can make Alice's secrecy rate less than 0 but the sum secrecy rates are greater than the allocation policy that can make Alice's secrecy rate equal or greater than 0. From the system's perspective, the first power allocation policy is better because it allows for a higher secure transmission rate. The above problem is difficult to solve due to non-convexity\footnote{Fractions inside and difference of the logarithms are not convex.}
and complex matrix differentiations. Instead, we approximate the ergodic rate employing the following result. 
\begin{lemma}
	\label{lem:MMSE-rate}For the received signal equation $\mathbf{y}=\mathbf{H}\mathbf{x}+\mathbf{z},$
	where $\mathbf{z}\sim\mathcal{CN}\left(\mathbf{0},\mathbf{C}_{z}\right)$
	and $\mathbf{x}\sim\mathcal{CN}\left(\mathbf{0},\mathbf{C}_{x}\right)$,
	the rate approximation can be obtained as 
	\begin{equation}
		\mathbb{E}\left \{ R \right \}\approx\log_{2}\left|\mathbf{I}+\mathbf{C}_{x}\mathbb{E}\left\{ \mathbf{H}^{H}\mathbb{E}\left\{ \mathbf{C}_{z}\right\} ^{-1}\mathbf{H}\right\} \right|.\label{eq:lem-eqn}
	\end{equation}
\end{lemma}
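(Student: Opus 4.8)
The plan is to obtain the claimed formula by combining two applications of Jensen's inequality with a small-perturbation approximation for the inverse of the disturbance covariance. First I would write the exact achievable rate of the MMSE--SIC receiver for $\mathbf{y}=\mathbf{H}\mathbf{x}+\mathbf{z}$ as $R=\log_{2}\left|\mathbf{I}+\mathbf{C}_{z}^{-1}\mathbf{H}\mathbf{C}_{x}\mathbf{H}^{H}\right|$ and, via Sylvester's determinant identity (cycling $\mathbf{C}_{x}^{1/2}$ and $\mathbf{C}_{z}^{-1/2}$), rewrite it with the signal covariance pulled to the front as $R=\log_{2}\left|\mathbf{I}+\mathbf{C}_{x}\mathbf{H}^{H}\mathbf{C}_{z}^{-1}\mathbf{H}\right|$. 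The expectation defining the ergodic rate is over the random desired channel $\mathbf{H}$ and over $\mathbf{C}_{z}$, which in the application collects the residual self-interference, the cross-interference, and the estimation-error terms.

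The next step is to move the expectation inside the log-determinant. Since the map $\mathbf{X}\mapsto\log_{2}\left|\mathbf{I}+\mathbf{X}\right|$ is concave on the positive semidefinite cone, Jensen's inequality gives $\mathbb{E}\{R\}\le\log_{2}\left|\mathbf{I}+\mathbf{C}_{x}\,\mathbb{E}\{\mathbf{H}^{H}\mathbf{C}_{z}^{-1}\mathbf{H}\}\right|$, which I retain as the ergodic-rate approximation (tight in the regimes of interest). To handle the inner expectation I would use that in this model $\mathbf{H}$ is statistically independent of the ingredients of $\mathbf{C}_{z}$ (the self-interference channels $\mathbf{G}_{i}$, the other node's channel, and the uncertainty matrices), so conditioning on $\mathbf{H}$ yields $\mathbb{E}\{\mathbf{H}^{H}\mathbf{C}_{z}^{-1}\mathbf{H}\}=\mathbb{E}_{\mathbf{H}}\{\mathbf{H}^{H}\,\mathbb{E}\{\mathbf{C}_{z}^{-1}\}\,\mathbf{H}\}$.

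It then remains to replace $\mathbb{E}\{\mathbf{C}_{z}^{-1}\}$ by $\left(\mathbb{E}\{\mathbf{C}_{z}\}\right)^{-1}$. Writing $\mathbf{C}_{z}=\bar{\mathbf{C}}_{z}+\tilde{\mathbf{C}}_{z}$ with $\bar{\mathbf{C}}_{z}=\mathbb{E}\{\mathbf{C}_{z}\}$ and $\mathbb{E}\{\tilde{\mathbf{C}}_{z}\}=\mathbf{0}$, a Neumann-series (second-order Taylor) expansion gives $\mathbf{C}_{z}^{-1}=\bar{\mathbf{C}}_{z}^{-1}-\bar{\mathbf{C}}_{z}^{-1}\tilde{\mathbf{C}}_{z}\bar{\mathbf{C}}_{z}^{-1}+O(\|\tilde{\mathbf{C}}_{z}\|^{2})$; taking expectations, the linear term vanishes and $\mathbb{E}\{\mathbf{C}_{z}^{-1}\}=\bar{\mathbf{C}}_{z}^{-1}+O(\mathbb{E}\|\tilde{\mathbf{C}}_{z}\|^{2})$, so the substitution is exact to first order and its error is governed by the fluctuation of the disturbance covariance (which is small when the noise-like term $\eta\approx\sigma^{2}$ dominates, or for large arrays where the interference covariance concentrates). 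Substituting back through the chain above produces $\mathbb{E}\{R\}\approx\log_{2}\left|\mathbf{I}+\mathbf{C}_{x}\,\mathbb{E}\{\mathbf{H}^{H}\,\mathbb{E}\{\mathbf{C}_{z}\}^{-1}\mathbf{H}\}\right|$, as stated. I expect this last replacement to be the only genuinely non-exact, and hence delicate, step; it is also worth noting that the two approximations push in opposite directions in the positive semidefinite order (the concavity step inflates the determinant, while $\left(\mathbb{E}\{\mathbf{C}_{z}\}\right)^{-1}\preceq\mathbb{E}\{\mathbf{C}_{z}^{-1}\}$ deflates it), which helps explain why the net approximation tracks the true ergodic rate closely. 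The two Jensen steps themselves are routine once the independence of $\mathbf{H}$ and $\mathbf{C}_{z}$ is invoked.
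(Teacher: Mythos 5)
Your proposal follows essentially the same route as the paper's own proof: rewrite the log-determinant with $\mathbf{C}_{x}$ pulled out, apply Jensen's inequality to move the expectation inside, and then replace $\mathbb{E}\{\mathbf{C}_{z}^{-1}\}$ by $\mathbb{E}\{\mathbf{C}_{z}\}^{-1}$, which is exactly the paper's steps $(a)$ and $(b)$. Your added justifications (Sylvester's identity, the independence needed to nest the expectations, and the Neumann-series argument showing the inverse swap is accurate to first order with the two approximations acting in opposite semidefinite directions) merely flesh out what the paper states tersely, so the argument is correct and not a different approach.
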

\begin{proof}
	Proof is given in  Appendix \ref{subsec:Proof-of-Lemma1}.
\end{proof}

\begin{corollary}
	\label{cor:ergodicrate}Based on the results given in Lemma \ref{lem:MMSE-rate}, ergodic rate approximations of Alice, Bob and Eve can be given as 
	\begin{align}
		& \mathbb{E}\left \{ R_{BA} \right \}\approx\tilde{R}_{BA}=b\log_{2}\left|1+\frac{N_{B}\beta_{BA}P_{s,A}}{b c_{BA}}\right|,\\
		& \mathbb{E} \left \{ R_{AB} \right \}\approx\tilde{R}_{AB}=b\log_{2}\left|1+\frac{N_{A}\beta_{AB}P_{s,B}}{b c_{AB}}\right|,\\
		& \mathbb{E}\left \{ R_{EA} \right \}\approx\tilde{R}_{EA}=b\log_{2}\left|1+\frac{N_{E}\beta_{EA}P_{s,A}}{b c_{E}}\right|,\label{eq:RE-appr}\\
		& \mathbb{E}\left \{ R_{EB} \right \}\approx\tilde{R}_{EB}=b\log_{2}\left|1+\frac{N_{E}\beta_{EB}P_{s,B}}{b c_{E}}\right|,
	\end{align}
	where
	\begin{align}
		P_{v,i}=tr(\mathbf{P}_{v,i}),~ v\in\left\{ s,w,\underline{w}\right\} ,i\in\left\{ A,B\right\} ,
	\end{align}
	\begin{align}
		\label{eq:cBA}
		c_{BA}=&tr(\mathbf{T}_{B})\eta+\theta\beta_{BA}\left(P_{w,A}+P_{\underline{w},A}\right) \nonumber \\
		&+\sigma_{\Delta,BA}^{2}tr(\mathbf{T}_{A})+\sigma^{2} ,
	\end{align}
	\begin{align}
		\label{eq:cAB}
		c_{AB}=&tr(\mathbf{T}_{A})\eta+\theta\beta_{AB}\left(P_{w,B}+P_{\underline{w},B}\right) \nonumber \\
		& +\sigma_{\Delta,AB}^{2}tr(\mathbf{T}_{B})+\sigma^{2} ,
	\end{align}
	\begin{align}
		\label{eq:cE}
		c_{E}=&\beta_{EA}\left(P_{s,A}\kappa_{A,E}+P_{w,A}+P_{\underline{w},A}\right) \nonumber \\
		&+\beta_{EB}\left(P_{s,B}\kappa_{B,E}+P_{w,B}+P_{\underline{w},B}\right)+\sigma^{2} .
	\end{align}
\end{corollary}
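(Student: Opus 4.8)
The plan is to apply Lemma~\ref{lem:MMSE-rate} to each of the four rate expressions $R_{BA}$, $R_{AB}$, $R_{EA}$, $R_{EB}$ in turn, identifying in each case the effective channel $\mathbf{H}$, the input covariance $\mathbf{C}_x$ (restricted to the desired-signal part only), and the noise-plus-interference covariance $\mathbf{C}_z$, and then evaluating the expectations $\mathbb{E}\{\mathbf{H}^H\mathbb{E}\{\mathbf{C}_z\}^{-1}\mathbf{H}\}$ using the i.i.d.\ Rayleigh statistics stated in the system model. The key structural fact I would use repeatedly is that for a matrix $\mathbf{H}$ with i.i.d.\ $\mathcal{CN}(0,\beta)$ entries of size $N_j\times N_i$ and any deterministic Hermitian $\mathbf{A}$, one has $\mathbb{E}\{\mathbf{H}^H\mathbf{A}\mathbf{H}\} = \beta\,\mathrm{tr}(\mathbf{A})\,\mathbf{I}_{N_i}$; together with $\mathbf{V}_i^H\mathbf{V}_i=\mathbf{I}_b$ this collapses every matrix determinant into a scalar raised to the power $b$, i.e.\ $|\mathbf{I}+c\,\mathbf{I}_b| = (1+c)^b$, yielding the $b\log_2(\cdot)$ form.

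First I would handle $\tilde R_{BA}$. The desired signal seen at Bob is $\hat{\mathbf{H}}_{BA}\mathbf{V}_A\mathbf{P}_{s,A}^{1/2}\mathbf{s}_A$, so $\mathbf{C}_x = \mathbf{P}_{s,A}$ and the effective channel is $\hat{\mathbf{H}}_{BA}\mathbf{V}_A$, whose entries (after the orthonormal $\mathbf{V}_A$) behave as i.i.d.\ $\mathcal{CN}(0,\beta_{BA})$ with $N_B$ rows; hence $\mathbb{E}\{\mathbf{V}_A^H\hat{\mathbf{H}}_{BA}^H\,\mathbb{E}\{\mathbf{C}_{BA}\}^{-1}\,\hat{\mathbf{H}}_{BA}\mathbf{V}_A\} = \beta_{BA}\,\mathrm{tr}(\mathbb{E}\{\mathbf{C}_{BA}\}^{-1})\,\mathbf{I}_b$. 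For the noise term I would take $\mathbb{E}\{\mathbf{C}_{BA}\}$: the RSIC term contributes $\mathbb{E}\{\mathbf{G}_B\mathbf{T}_B\mathbf{G}_B^H\} = \eta\,\mathrm{tr}(\mathbf{T}_B)\,\mathbf{I}$; the two AN terms (when $\theta=1$) contribute $\theta\beta_{BA}(P_{w,A}+P_{\underline w,A})\,\mathbf{I}$ after using $\mathrm{tr}(\mathbf{P}_{w,A})=P_{w,A}$ etc.\ and the same expectation identity on $\hat{\mathbf{H}}_{BA}$; and the channel-uncertainty plus thermal term is already $(\sigma_{\Delta,BA}^2 P_A + \sigma^2)\mathbf{I}$ — though here I would note the minor discrepancy that the Corollary writes $\sigma_{\Delta,BA}^2\,\mathrm{tr}(\mathbf{T}_A)$ rather than $\sigma_{\Delta,BA}^2 P_A$; I would reconcile this by observing $\mathrm{tr}(\mathbf{T}_A)=P_A$ only when the power budget is met with equality, or else that this is the intended substitution. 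Collecting these gives $\mathbb{E}\{\mathbf{C}_{BA}\} \approx c_{BA}\,\mathbf{I}$ with $c_{BA}$ as in \eqref{eq:cBA}, so $\mathrm{tr}(\mathbb{E}\{\mathbf{C}_{BA}\}^{-1}) = N_B/c_{BA}$ and the determinant becomes $(1 + N_B\beta_{BA} P_{s,A}/(b\,c_{BA}))^b$, matching $\tilde R_{BA}$. The derivation of $\tilde R_{AB}$ is symmetric (swap $A\leftrightarrow B$ and use $N_A$).

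For $\tilde R_{EA}$ and $\tilde R_{EB}$ the same recipe applies with $\mathbf{H}_{EA}$ (i.i.d.\ $\mathcal{CN}(0,\beta_{EA})$, $N_E$ rows) and Eve's precoder $\hat{\mathbf{V}}_{A,E}$; the wrinkle is that $\hat{\mathbf{V}}_{A,E}^H\mathbf{V}_A\neq\mathbf{I}$, and here I would invoke the chordal-distance bookkeeping so that the "leakage" of the true signal into Eve's covariance $\mathbf{C}_E$ appears as the $P_{s,A}\kappa_{A,E}$ (and $P_{s,B}\kappa_{B,E}$) terms in \eqref{eq:cE}; since $\mathbf{C}_E = \mathbf{H}_E(\mathbf{T}_E - \hat{\mathbf{V}}_E\mathbf{P}_{s,E}\hat{\mathbf{V}}_E^H)\mathbf{H}_E^H + \sigma^2\mathbf{I}$, taking expectations over $\mathbf{H}_{EA},\mathbf{H}_{EB}$ and using block-diagonality of $\mathbf{T}_E$ gives $\mathbb{E}\{\mathbf{C}_E\}\approx c_E\,\mathbf{I}$. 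For $\tilde R_{EB}$, note the SIC structure puts $\mathbf{H}_{EA}\hat{\mathbf{V}}_{A,E}\mathbf{P}_{s,A}\hat{\mathbf{V}}_{A,E}^H\mathbf{H}_{EA}^H + \mathbf{C}_E$ in the denominator; taking expectations, the first piece averages to $N_E\beta_{EA}P_{s,A}$ but it is already accounted inside the $\kappa$-free part of $c_E$ via the $\mathbf{T}_E$ term — I would be careful to confirm that the Alice-signal-plus-$\mathbf{C}_E$ denominator collapses to the same $c_E\,\mathbf{I}$ up to the approximation, which is why both $\tilde R_{EA}$ and $\tilde R_{EB}$ share the denominator $c_E$. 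The main obstacle I anticipate is precisely this last point — justifying rigorously that the successive-cancellation denominators for Eve both reduce to (a scalar multiple of) the identity $c_E\,\mathbf{I}$ under the expectation-of-inverse approximation of Lemma~\ref{lem:MMSE-rate}, rather than retaining cross terms between $\mathbf{H}_{EA}$ and $\mathbf{H}_{EB}$; the independence of $\mathbf{H}_{EA}$ and $\mathbf{H}_{EB}$ and the block-diagonal $\mathbf{T}_E$ are what make this go through, and I would spell that out carefully. All remaining steps are the routine trace identities sketched above.
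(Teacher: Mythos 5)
Your treatment of $\tilde R_{BA}$, $\tilde R_{AB}$ and $\tilde R_{EA}$ follows essentially the paper's route: apply Lemma \ref{lem:MMSE-rate}, use $\mathbb{E}\{\mathbf{H}^{H}\mathbf{A}\mathbf{H}\}=\beta\,\mathrm{tr}(\mathbf{A})\,\mathbf{I}$ together with the orthonormal precoders so that every determinant collapses to $(1+c)^{b}$, and replace the interference covariance by its expectation $c\,\mathbf{I}$ (the paper, like you, glosses over the $\mathrm{tr}(\mathbf{T}_{A})$ versus $P_{A}$ substitution and the chordal-distance bookkeeping that puts the $\kappa$ terms into $c_{E}$).

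The genuine gap is exactly at the point you flagged as the main obstacle: the denominator of $\tilde R_{EB}$. If you apply the lemma separately to the SIC-form $R_{EB}$, its noise-plus-interference covariance is $\mathbf{H}_{EA}\hat{\mathbf{V}}_{A,E}\mathbf{P}_{s,A}\hat{\mathbf{V}}_{A,E}^{H}\mathbf{H}_{EA}^{H}+\mathbf{C}_{E}$, whose expectation is $\left(\beta_{EA}P_{s,A}+c_{E}\right)\mathbf{I}$ (note also there is no factor $N_{E}$ here, since $\mathbb{E}\{\mathbf{H}\mathbf{A}\mathbf{H}^{H}\}=\beta\,\mathrm{tr}(\mathbf{A})\,\mathbf{I}_{N_{E}}$). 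Your claim that this Alice-signal term is ``already accounted inside the $\kappa$-free part of $c_{E}$'' is false: $c_{E}$ contains Alice's data power only through $\beta_{EA}P_{s,A}\kappa_{A,E}$ with $\kappa_{A,E}$ typically small, and the independence of $\mathbf{H}_{EA}$ and $\mathbf{H}_{EB}$ only kills cross terms, it does not remove the full-strength $\beta_{EA}P_{s,A}\mathbf{I}$ contribution. Proceeding your way you would land on $\tilde R_{EB}=b\log_{2}\bigl(1+\frac{N_{E}\beta_{EB}P_{s,B}}{b\left(c_{E}+\beta_{EA}P_{s,A}\right)}\bigr)$, not the stated expression. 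The paper avoids this by approximating Eve's total rate $R_{E}=R_{EA}+R_{EB}$ jointly: via the determinant chain-rule identity the SIC pair collapses into the single log-det $\log_{2}\left|\mathbf{I}+\mathbf{H}_{E}\hat{\mathbf{V}}_{E}\mathbf{P}_{s,E}\hat{\mathbf{V}}_{E}^{H}\mathbf{H}_{E}^{H}\mathbf{C}_{E}^{-1}\right|$ with the stacked channel $\mathbf{H}_{E}$ and block-diagonal $\hat{\mathbf{V}}_{E},\mathbf{P}_{s,E}$; Lemma \ref{lem:MMSE-rate} is applied once, the expected quadratic form becomes block diagonal with blocks $N_{E}\beta_{EA}\mathbf{P}_{s,A}$ and $N_{E}\beta_{EB}\mathbf{P}_{s,B}$ over the common scalar $c_{E}$, and the determinant then factors into the two displayed terms, both with denominator $c_{E}$. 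In other words, $\tilde R_{EA}$ and $\tilde R_{EB}$ in the corollary are the two factors of this joint approximation — their sum approximates $\mathbb{E}\{R_{E}\}$, which is all the secrecy objective requires — rather than stand-alone approximations of the SIC-defined $R_{EA}$ and $R_{EB}$; that joint treatment is the idea missing from your proposal.
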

\begin{proof}
	Proof is given in Appendix \ref{subsec:Proof-of-Corollary2}.
\end{proof}

\begin{figure*}
	\centering
	\subfigure[Bob and Eve]{
		\begin{minipage}[t]{0.5\linewidth}
			\centering
			\includegraphics[width=1.0\linewidth]{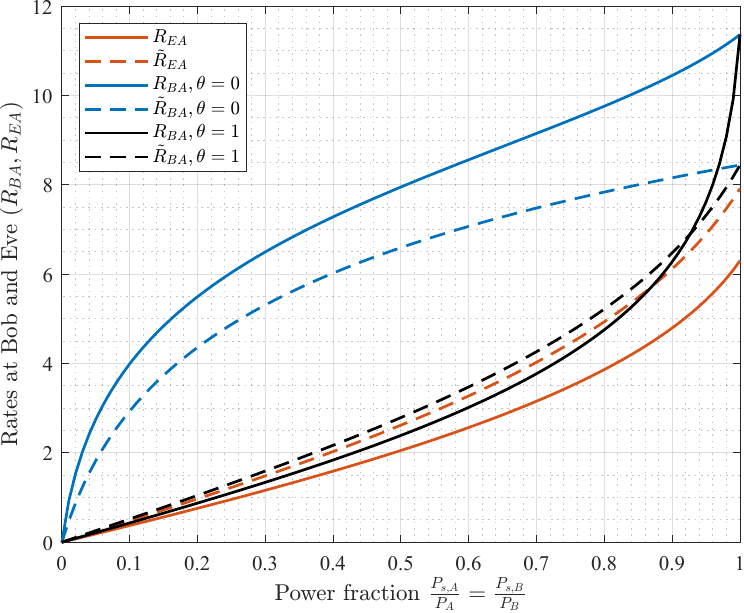}
		\end{minipage}%
	}%
	\subfigure[Alice and Eve]{
		\begin{minipage}[t]{0.5\linewidth}
			\centering
			\includegraphics[width=1.0\linewidth]{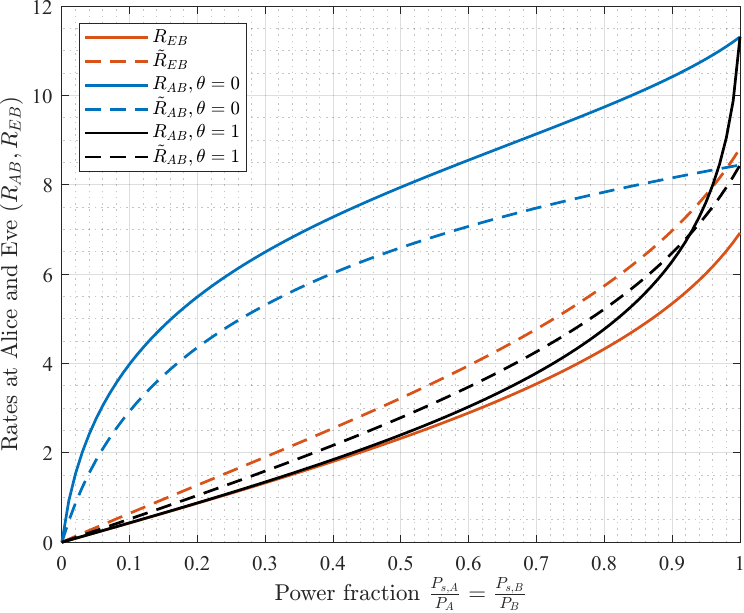} 
		\end{minipage}%
	}%
	\centering
	\caption{Rates at Alice $(0,0)$, Bob $(0,1)$ and Eve $(0.5,5)$ with approximations for $P_{s,i}=1-P_{w,i}-P_{\underline{w},i}$, $P_{w,i}=P_{\underline{w},i}$, $P_{i}=25$ dB, $\kappa_{iE}=0.1$, $N_{A}=N_{B}=\frac{N_{E}}{2}=2b=4$.\label{fig:appr}}
\end{figure*}
Figure \ref{fig:appr} plots the rates for the transmit signals from
Alice (left) and Bob (right), averaged over 100 channel realizations.
It can be observed that the approximation is close at lower fraction
($\frac{P_{s,i}}{P_{i}}$), and the gap becomes larger at higher
values. The trend of plots is useful in solving for optimum power
allocation. For $\theta=1$, curves show convex behavior, whereas, for $\theta=0$, a concave function can be approximated.

\subsection{Coarse power allocation}
Based on the approximations,
the above problem in Equation \eqref{eq:optSR} can be recast as 
\begin{align}
	& \max_{P_{v,i},\forall v,i} \tilde{R}_{BA}-\tilde{R}_{EA}+\tilde{R}_{AB}-\tilde{R}_{EB},
	\\
	\text{s.t.} ~& tr\left(\mathbf{P}_{s,i}+\mathbf{P}_{w,i}\right)+tr\left(\mathbf{P}_{\underline{w},i}\right)\leq P_{i},\forall i=A,B. \nonumber 
\end{align}
Since contributions from $P_{w,i}$ and $P_{\underline{w},i}$ are
similar in approximations, these variables can be aggregated. Moreover,
the constraints are considered to be satisfied with equality since more strength can be allocated to artificial
noise to reduce Eve rates. Let $\gamma_{i}=\frac{P_{s,i}}{P_{i}}\in\left[0,1\right],i\in\left\{ A,B\right\} $
be the fraction of power budget allocation in the desired signal direction.
Consequently, the aggregated artificial noise power can be obtained
as $P_{w,i}+P_{\underline{w},i}=\left(1-\gamma_{A}\right)P_{A}$.
Substituting this, the optimization problem can be reduced as 
\begin{equation}
	\max_{0\leq\gamma_{i}\leq1,i\in\left\{ A,B\right\} }\tilde{R}_{BA}-\tilde{R}_{EA}+\tilde{R}_{AB}-\tilde{R}_{EB}.
\end{equation}
The solution can be obtained using successive convex approximation
algorithm. However, to describe and verify the solution in different
cases, the positive secrecy regions are analyzed as follows. Using
epigraph technique, one can rewrite the optimization problem as 
\begin{align}
    \max_{0\leq\gamma_{i}\leq1,t_{i},i\in\left\{ A,B\right\} } t_{A} & +t_{B} \\
	\text{s.t.} ~ \tilde{R}_{BA}-\tilde{R}_{EA} &\geq t_{A},\label{eq:RBAcon} \\
	\tilde{R}_{BA}-\tilde{R}_{EA} &\geq t_{B},\label{eq:RABcon}\\
	t_{A},t_{B} & \geq0.
\end{align}
Assuming $\mathbf{u}^{T}=\left[\gamma_{A},\gamma_{B}\right]$, Equations \eqref{eq:cBA}-\eqref{eq:cE} can be simplified as in Appendix \ref{subsec:Quadratic-cons-equation}, providing
\begin{align}
	c_{BA} &=a_{BA}\mathbf{e}_{1}^{T}\mathbf{u}+g_{BA}, \\
	c_{AB} & =a_{AB}\mathbf{e}_{2}^{T}\mathbf{u}+g_{AB}, \\
	c_{E} & =\mathbf{a}_{E}^{T}\mathbf{u}+g_{E} .
\end{align}
Accordingly, the constraints in (\ref{eq:RBAcon})-(\ref{eq:RABcon}) can be rearranged
for $\bar{t}_{i}=2^{\frac{t_{i}}{d}},i\in\left\{ A,B\right\}$ as the equations on the top of the next page, which depicts bi-variate quadratic equations with $\bar{t}_i = 2^{t_i}$. 
\newcounter{mytempeqncnt}
\begin{figure*}
    \normalsize
    \setcounter{mytempeqncnt}{\value{equation}}
    \begin{equation}
    \begin{split}
       	    \implies & \mathbf{u}^{T}\left[\frac{\mathbf{a}_{E}\mathbf{e}_{1}^{T}+\mathbf{e}_{1}\mathbf{a}_{E}^{T}}{2}\left(a_{BA}(1-\bar{t}_{A})+\frac{N_{B}\beta_{BA}P_{A}}{d}\right)-\bar{t}_{A}a_{BA}\frac{N_{E}\beta_{EA}P_{A}}{d}\mathbf{e}_{1}\mathbf{e}_{1}^{T}\right]\mathbf{u}\nonumber \\
		& \quad +\mathbf{u}^{T}\left[\left(\mathbf{a}_{E}b_{BA}+\mathbf{e}_{1}a_{BA}b_{E}\right)(1-\bar{t}_{A})+\mathbf{e}_{1}b_{E}\frac{N_{B}\beta_{BA}P_{A}}{d}-\mathbf{e}_{1}\bar{t}_{A}b_{BA}\frac{N_{E}\beta_{EA}P_{A}}{d}\right] \nonumber \\
		& \quad +b_{BA}b_{E}(1-\bar{t}_{A})\geq0, \\
	    \implies & \mathbf{u}^{T}\left[\frac{\bar{\mathbf{a}}_{E}\mathbf{e}_{2}^{T}+\mathbf{e}_{2}\bar{\mathbf{a}}_{E}^{T}}{2}\left(a_{AB}(1-\bar{t}_{B})+\frac{N_{A}\beta_{AB}P_{B}}{d}\right)-\bar{t}_{B}a_{AB}\frac{N_{E}\beta_{EB}P_{B}}{d}\mathbf{e}_{2}\mathbf{e}_{2}^{T}\right]\mathbf{u}\nonumber \\
		& \quad +\mathbf{u}^{T}\left[\left(\bar{\mathbf{a}}_{E}b_{AB}+\mathbf{e}_{2}a_{AB}b_{E}\right)(1-\bar{t}_{B})+\mathbf{e}_{2}b_{E}\frac{N_{A}\beta_{AB}P_{B}}{d}-\mathbf{e}_{2}\bar{t}_{B}b_{AB}\frac{N_{E}\beta_{EB}P_{B}}{d}\right] \nonumber \\
		& \quad +b_{AB}b_{E}(1-\bar{t}_{B})\geq0,
    \end{split}
    \end{equation}
    \hrulefill
    \vspace*{4pt}
\end{figure*}

\subsubsection{When Eve is far away or not present}
This scenarios suggests $\beta_{Ei}=0$ and $R_{E}=0$. In this scenario, power can be allocated according to
the constraints $t_{i}$, that is, solving $R_{BA}\geq t_{A}$ and
$R_{AB}\geq t_{B}$ as
\begin{align*}
	& 1+\frac{N_{j}\beta_{ji}\gamma_{i}P_{i}}{bc_{ji}}\geq\bar{t}_{i}\\
	\implies & \left(a_{ji}\gamma_{i}+g_{ji}\right)(1-\bar{t}_{i})+\frac{N_{j}\beta_{ji}\gamma_{i}P_{i}}{b}\geq0\\
	\implies & \gamma_{i}\left(a_{ji}(1-\bar{t}_{i})+\frac{N_{j}\beta_{ji}P_{i}}{b}\right)+g_{ji}(1-\bar{t}_{i})\geq0\\
	\implies & \gamma_{i}=-\frac{g_{ji}(1-\bar{t}_{i})}{a_{ji}(1-\bar{t}_{i})+\frac{N_{j}\beta_{ji}P_{i}}{b}}\geq0 ,\\
\end{align*}
providing the solutions as
\begin{align}
	\gamma_{A} & =\frac{g_{BA}(\bar{t}_{A}-1)}{a_{BA}(1-\bar{t}_{A})+\frac{N_{B}\beta_{BA}P_{A}}{b}},\\
	\gamma_{B} & =\frac{g_{AB}(\bar{t}_{B}-1)}{a_{AB}(1-\bar{t}_{B})+\frac{N_{A}\beta_{AB}P_{B}}{b}}.
\end{align}
The solutions rely on parameters that are available at Alice and Bob, so it is feasible.

\subsubsection{When secrecy rates are non-negative}
Let $t_{i}=0,i\in\left\{ A,B\right\} $ such that $\bar{t}_{i}=2^{t_i}=1$. Then, the bi-variate quadratic equations are reduced
to line equations as 
\begin{align}
	& \mathbf{u}^{T}\left[\mathbf{a}_{E}\left(\frac{N_{B}\beta_{BA}P_{A}}{b}\right)-a_{BA}\frac{N_{E}\beta_{EA}P_{A}}{b}\mathbf{e}_{1}\right] \nonumber \\
	&+\left[g_{E}\frac{N_{B}\beta_{BA}P_{A}}{b}-g_{BA}\frac{N_{E}\beta_{EA}P_{A}}{b}\right]\geq0,\\
	& \mathbf{u}^{T}\left[\mathbf{a}_{E}\left(\frac{N_{A}\beta_{AB}P_{B}}{b}\right)-a_{AB}\frac{N_{E}\beta_{EB}P_{B}}{b}\mathbf{e}_{2}\right] \nonumber \\
	&+ \left[g_{E}\frac{N_{A}\beta_{AB}P_{B}}{b}-g_{AB}\frac{N_{E}\beta_{EB}P_{B}}{b}\right]\geq0 .
\end{align}
\begin{figure*}
	\centering
	\subfigure[positive secrecy region at Bob with known AN]{
		\begin{minipage}[t]{0.5\linewidth}
			\centering
			\includegraphics[width=1.0\linewidth]{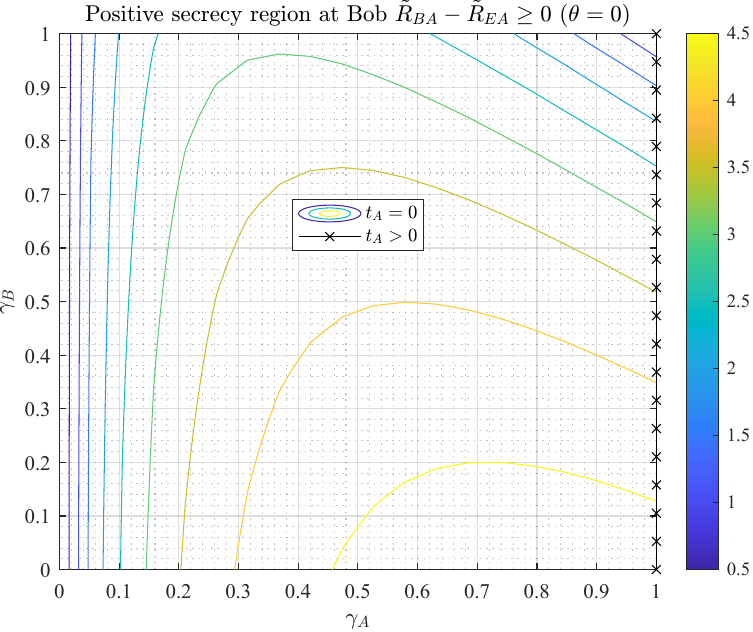}
		\end{minipage}%
	}%
	\subfigure[positive secrecy region at Alice with known AN]{
		\begin{minipage}[t]{0.5\linewidth}
			\centering
			\includegraphics[width=1.0\linewidth]{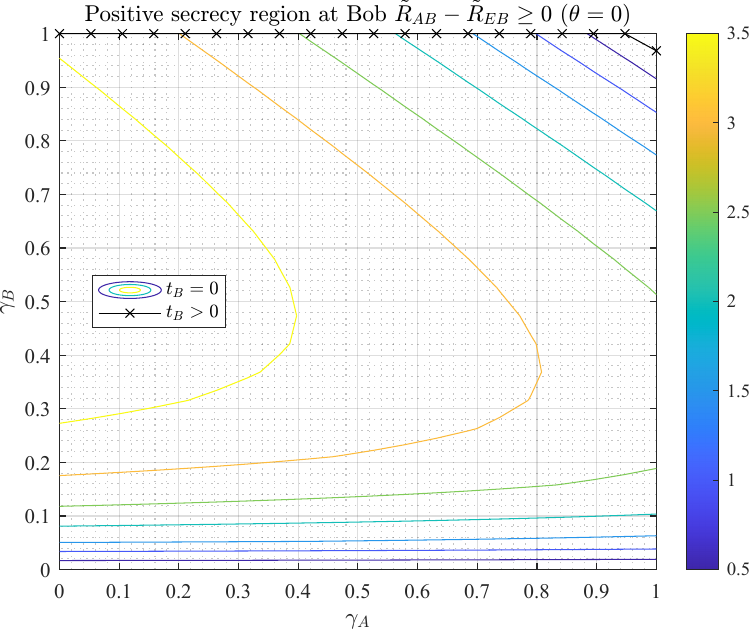} 
		\end{minipage}%
    }%
    
	\subfigure[positive secrecy region at Bob with unknown AN]{	
		\begin{minipage}[t]{0.5\linewidth}
			\centering
			\includegraphics[width=1.0\linewidth]{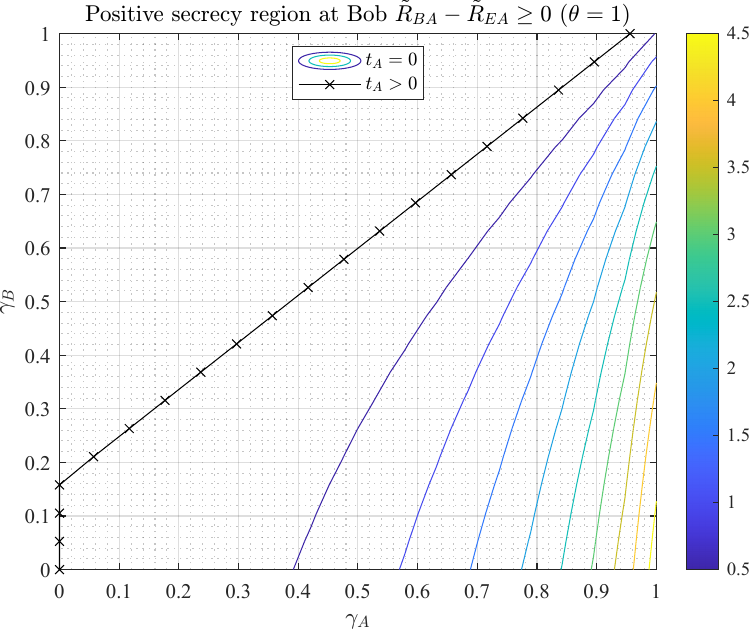}
		\end{minipage}%
		\label{fig:s-region3}
	}%
		\subfigure[positive secrecy region at Alice with unknown AN]{
		\begin{minipage}[t]{0.5\linewidth}
			\centering
			\includegraphics[width=1.0\linewidth]{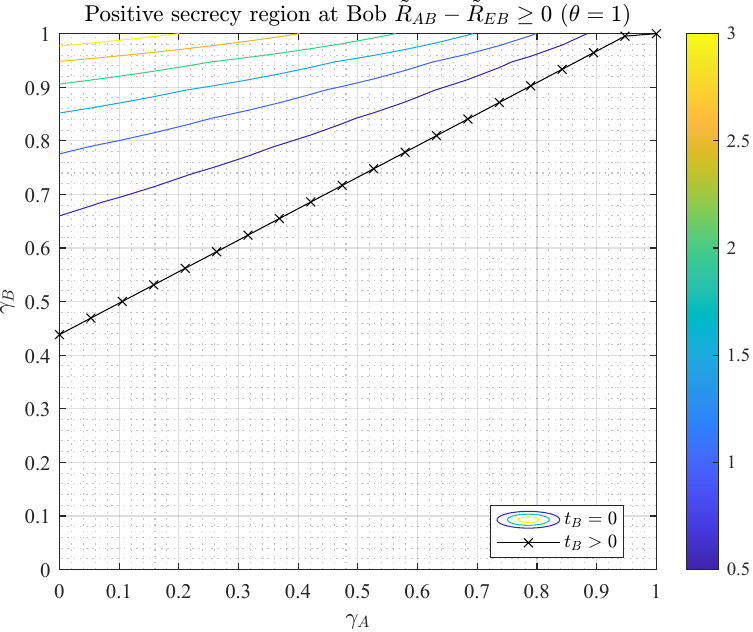} 
		\end{minipage}%
		\label{fig:s-region4}
	}%
	\centering
	\caption{Secrecy rate regions for Alice and Bob streams with known and unknown artificial noise $\theta=0,1$.\label{fig:s-region}}
\end{figure*}
To visualize above equations, Figure \ref{fig:s-region} depicts contour
plots for secrecy regions involving linear and quadratic equations
with $\theta=0$ (known AN) and $\theta=1$ (unknown AN), and simulation settings are same as in Figure \ref{fig:appr}. It can be
observed that for $\theta=0$, left (or below) of $t_{i}=0$ line,
secrecy rates are positive for Alice (or Bob) rates, whereas below
(or above) is the positive region when $\theta=1$. In other words,
for $\theta=1$, higher secrecy rate values are concentrated towards,
higher $\gamma_{A}$ (Figure \ref{fig:s-region3}) and $\gamma_{B}$ (Figure \ref{fig:s-region4}).
It can be inferred that for $\theta=1$, due to monotonic rates, obtaining
optimum solution for $\mathbf{u}$ is simpler than the case of $\theta=0$.
Therefore, for $\theta=1$, we employ first-order derivative based
iterative successive convex approximation algorithm, which is similar
to gradient descent method. On the other hand, for $\theta=0$, Hessian
based updates are used, also known as Newton seconder order method.
For completeness, steps are summarized for both $\theta=0$ and $\theta=1$
in the Algorithm \ref{alg:A1S1}, where 
\begin{equation}
	f(\mathbf{u})=\tilde{R}_{BA}-\tilde{R}_{EA}+\tilde{R}_{AB}-\tilde{R}_{EB}.
\end{equation}
In this procedure, after initializing the variables $\mathbf{u}$,
they are updated according to the update equation until the convergence,
that is, $\left\Vert \nabla f\right\Vert _{2}\leq\epsilon$, where
$\epsilon>0$ is a small tolerance value. In general, entries of gradient
and Hessian for each rate expression with respect to a scalar variable
(say $x$), say $\mathcal{F}(x)=\log\left|\mathbf{I}+\mathbf{F}_{i}(x)\mathbf{G}_{i}^{-1}(x)\right|$,
can be obtained as 
\begin{align}
	& \frac{\partial \mathcal{F}}{\partial x}=tr\left[\left(\mathbf{G}_{i}+\mathbf{F}_{i}\right)^{-1}\left(-\mathbf{F}_{i}\mathbf{G}_{i}^{-1}\frac{\partial\mathbf{G}_{i}}{\partial x}+\frac{\partial\mathbf{F}_{i}}{\partial x}\right)\right],\label{eq:fdiff}\\
	& \frac{\partial^{2}\mathcal{F}}{\partial y\partial x}=tr\Big[\left(\mathbf{G}_{i}+\mathbf{F}_{i}\right)^{-1}\Big\{-\frac{\partial\mathbf{G}_{i}+\mathbf{F}_{i}}{\partial y}\left(\mathbf{G}_{i}+\mathbf{F}_{i}\right)^{-1}\nonumber \\
	& \times\left(-\mathbf{F}_{i}\mathbf{G}_{i}^{-1}\frac{\partial\mathbf{G}_{i}}{\partial x}+\frac{\partial\mathbf{F}_{i}}{\partial x}\right)-\frac{\partial\mathbf{F}_{i}}{\partial y}\mathbf{G}_{i}^{-1}\frac{\partial\mathbf{G}_{i}}{\partial x}\nonumber \\
	& +\mathbf{F}_{i}\mathbf{G}_{i}^{-1}\frac{\partial\mathbf{G}_{i}}{\partial y}\mathbf{G}_{i}^{-1}\frac{\partial\mathbf{G}_{i}}{\partial x}-\mathbf{F}_{i}\mathbf{G}_{i}^{-1}\frac{\partial^{2}\mathbf{G}_{i}}{\partial y\partial x}+\frac{\partial^{2}\mathbf{F}_{i}}{\partial y\partial x}\Big\}\Big].
\end{align}
Regarding the convergence, since the problem contains two variables,
gradient descent method converge in few iterations (of the order of
$10$); especially Newton method achieve the optimum value in less
than $10$ iterations for two variables. Hence, for these popular
methods, plots for optimization progress versus number of iterations
are omitted.
\begin{algorithm}
	\begin{algorithmic}[1]
		
		\State{\label{A1s1}Initialize $\mathbf{u}\in\left[0,1\right]^{2}$}
		
		\For{ $t=1,2,\ldots$}
		
		\State{compute the gradient $\nabla f$ and Hessian $\nabla\nabla^{H}f$,
		}
		
		\If{$\theta=1$}
		
		\State{$\mathbf{u}\leftarrow\mathbf{u}-\left[\nabla\nabla^{H}f\right]^{-1}\nabla f$,}
		
		\Else
		
		\State{$\mathbf{u}\leftarrow\mathbf{u}-\nabla f$,}
		
		\State{clip $\mathbf{u}$ within box constraints $\mathbf{u}\in\left[0,1\right]^{2}$,}
		
		\EndIf
		
		\State{break at convergence, i.e., $\left\Vert \nabla f\right\Vert _{2}\leq\epsilon$,
		}
		
		\EndFor
		
	\end{algorithmic}
	
	\caption{Secrecy power allocation using SCA algorithm.\label{alg:A1S1}}
\end{algorithm}

\subsubsection{High SNR regime\label{subsec:High-SNR-regime}}
Secrecy region plots show that the case of $\gamma_{i}^{*}=1$ occurs
mostly when $\theta=1$ (see yellow contour lines). It is due to the
fact that the product in the rates (at Bob) $\theta\beta_{BA}\left(P_{w,A}+P_{\underline{w},A}\right)=\theta\beta_{BA}\left(1-\gamma_{A}\right)P_{A}$
with $\theta=1$ can be minimized when $\gamma_{A}=1$. 
When $P_{i}=P\rightarrow\infty$, the rates can be given as 
\begin{align}
	& \tilde{R}_{BA}=b\log_{2}\frac{N_{B}\beta_{BA}\gamma_{A}}{b\left(\eta+\theta\beta_{BA}\left(1-\gamma_{A}\right)+\sigma_{\Delta,BA}^{2}\right)},\label{eq:highSNR_RBA}\\
	& \tilde{R}_{EA}=b\log_{2}\frac{N_{E}\beta_{EA}\gamma_{A}}{b\sum_{i\in\left\{ A,B\right\} }\beta_{Ei}\left(\gamma_{i}\kappa_{i,E}+1-\gamma_{i}\right)}.\label{eq:highSNR_REA}
\end{align}
With $\gamma_{i}=1$, we have 
\begin{align}
	& \tilde{R}_{BA}=b\log_{2}\frac{N_{B}\beta_{BA}}{b\left(\eta+\sigma_{\Delta,BA}^{2}\right)},\\
	& \tilde{R}_{EA}=b\log_{2}\frac{N_{E}\beta_{EA}}{b\left(\beta_{EA}\kappa_{A,E}+\beta_{EB}\kappa_{B,E}\right)},
\end{align}
which shows that if Eve has a higher number of antennas (or a lower path
loss, or more information about the legitimate channel), it provides $\tilde{R}_{EA}>\tilde{R}_{BA}$,
leading to no-secrecy rates. The term $\gamma_{i}=1$ allocates all
power to desired signals and no-power to artificial noise, which makes
Eve easier to decode signals.

\subsection{Fine power allocation}
Since the accurate CSI $\hat{\mathbf{H}}_{AB}$ is only known to Alice
and Bob, the fine power allocation step can help improve the secrecy.
As Alice and Bob are unaware of the eavesdropping channels $\mathbf{H}_{EA}$
and $\mathbf{H}_{EB}$, Alice and Bob can only tailor this design
to maximize their own rates. The optimization problem can be cast
as 
\begin{align}
    & \max_{\mathbf{P}_{s,i},\mathbf{P}_{w,i},\mathbf{P}_{\underline{w},i},i\in\left\{ A,B\right\} } R_{AB}+R_{BA}\\
	\text{s.t. }~ tr & \left ( \mathbf{P}_{s,i} \right )=\gamma_{i}P_{i},\\
	tr & \left(\mathbf{P}_{\underline{w},i}+\mathbf{P}_{w,i}\right)=\left(1-\gamma_{i}\right)P_{i},i\in\left\{ A,B\right\} ,
\end{align}
which is difficult to solve due to matrix fractions, and we give suboptimal solutions as follows.

\subsubsection{Equal power allocation}
For equal power allocation, we set 
\begin{gather}
	\mathbf{P}_{s,i}=\gamma_{i}P_{i}\frac{1}{b}\mathbf{I}, ~ \mathbf{P}_{w,i}=\frac{1-\gamma_{i}}{2}P_{i}\frac{1}{b}\mathbf{I},~\textup{and}~\mathbf{P}_{\underline{w},i}=\frac{1-\gamma_{i}}{2}P_{i}\frac{1}{n_i}\mathbf{I}.\label{eq:fine_equal}
\end{gather}
\subsubsection{Suboptimal power allocation}
For a sub-optimal solution, we decouple the problem for Alice and
Bob, and choose $\mathbf{P}_{s,i}$ to maximize the desired signal
power, and $\mathbf{P}_{\underline{w},i}$ to minimize the RSI plus
noise power. For $\mathbf{P}_{s,i}$, we write 
\begin{align}  
    \arg\max_{\mathbf{P}_{s,i}} & tr  \left(\hat{\mathbf{H}}_{ji}\mathbf{V}_{i}\mathbf{P}_{s,i}\mathbf{V}_{i}^{H} \hat{\mathbf{H}}_{ji}^{H}\right)  \\
	& \text{s.t.} ~ tr\left ( \mathbf{P}_{s,i} \right ) =\gamma_{s,i}P_{i} ,
\end{align}
which are equivalent to
\begin{align}  
    \arg\max_{\mathbf{P}_{s,i}} & \gamma_{i}  P_{i}\mathcal{D}\left(\frac{\text{diag}(\mathbf{V}_{i}^{H}\hat{\mathbf{H}}_{ji}^{H}\hat{\mathbf{H}}_{ji}\mathbf{V}_{i})}{tr\left(\mathbf{V}_{i}^{H}\hat{\mathbf{H}}_{ji}^{H}\hat{\mathbf{H}}_{ji}\mathbf{V}_{i}\right)}\right),\label{eq:PsiOpt} \\
	& \text{s.t.} ~ tr\left ( \mathbf{P}_{s,i} \right ) =\gamma_{s,i}P_{i} ,
\end{align}
where $i\in\left\{ A,B\right\} $, $j\in\left\{ A,B\right\} \setminus\left\{ i\right\} $,
and the operator $\text{diag}\left(\mathbf{A}\right)$ provides a
vectors of diagonal entries of $\mathbf{A}$. Values in $\mathbf{P}_{\underline{w},i}$
for the two cases can be obtained as follows. It can be noted that
the constraint, the values in $\mathbf{P}_{s,i},\mathbf{P}_{\underline{w},i},i\in\left\{ A,B\right\} $
should be non-negative, is adopted implicitly.
\paragraph{Case $\theta=1$ (unknown AN)}
In this case, we consider the minimization problem as
\begin{align}  
    \min_{\mathbf{P}_{\underline{w},i},\mathbf{P}_{w,i}} &  tr\left(\hat{\mathbf{H}}_{ji}\left(\mathbf{V}_{i}\mathbf{P}_{w,i}\mathbf{V}_{i}^{H}+\underline{\mathbf{V}}_{i}\mathbf{P}_{\underline{w},i}\underline{\mathbf{V}}_{i}^{H}\right)\hat{\mathbf{H}}_{ji}^{H}\right) , \\
	& \text{s.t.} ~ tr\left(\mathbf{P}_{\underline{w},i}+\mathbf{P}_{w,i}\right) =\left(1-\gamma_{i}\right)P_{i},i\in\left\{ A,B\right\},
\end{align}
Expressing in terms of diagonal values, that is, $\mathbf{p}^{T}=\left[\text{diag}(\mathbf{P}_{\underline{w},i})^{T},\text{diag}(\mathbf{P}_{w,i})^{T}\right]$
and $\mathbf{p}_{0}^{T}=\left[\text{diag}\left(\underline{\mathbf{V}}_{i}^{H}\hat{\mathbf{H}}_{ji}^{H}\hat{\mathbf{H}}_{ji}\underline{\mathbf{V}}_{i}\right)^{T},\text{diag}(\mathbf{V}_{i}^{H}\hat{\mathbf{H}}_{ji}^{H}\hat{\mathbf{H}}_{ji}\mathbf{V}_{i})\right]$,
we have 
\begin{align}
	\label{eq:AN_power_allocation}
    \mathbf{p}=\arg\min_{\mathbf{1}^{T}\mathbf{p}=\left(1-\gamma_{i}\right)P_{i},\mathbf{p}\geq0}~\mathbf{p}^{T}\mathbf{p}_{0} =\left(1-\gamma_{i}\right)P_{i}\mathbf{e}_{k},
\end{align}
where $k=\arg\min_{k}\mathbf{p}_{0}(k)$. This method allocates power
to one of the streams, thus, it is not efficient from the perspective of
minimizing Eve rates. We define $\xi_i = \frac{P_{w,i}}{P_i}$ to describe the ANs' power allocation policy, such that $P_{w,i} = \xi_i P_i$ and $P_{\underline{w},i} = \left ( 1-\xi_i \right )P_i$.

Another way of choosing unequal power allocation for artificial noise
is to choose based on the inverse of eigenvalues of $\hat{\mathbf{H}}_{ji}$. Performing eigenvalue decomposition such that
\begin{align}
	\mathbf{V}_{i}^{H}\hat{\mathbf{H}}_{ji}^{H}\hat{\mathbf{H}}_{ji}\mathbf{V}_{i}=\mathbf{U}_{w,i}\Lambda_{w,i}\mathbf{U}_{w,i}^{H},
\end{align}
the power can be allocated as
\begin{align}
	\label{eq:eig_power_allocation}
	\mathbf{P}_{w,i}=\frac{1-\gamma_{i}}{2}P_{i}\frac{\Lambda_{w,i}^{-1}}{\mathbf{1}^{T}\Lambda_{w,i}^{-1}\mathbf{1}} .
\end{align}
 
\paragraph{Case $\theta=0$ (known AN) }
Since $\mathbf{C}_{ji}$ does not contain the term of $\mathbf{P}_{\underline{w},i}$,
thus, we can choose these diagonal values uniformly, that is
\begin{align}
	\mathbf{P}_{w,i}=\frac{1-\gamma_{i}}{2}P_{i}\frac{1}{b}\mathbf{I},~\text{and}~\mathbf{P}_{\underline{w},i}=\frac{1-\gamma_{i}}{2}P_{i}\frac{1}{n_i}\mathbf{I} .
\end{align}
However, Eve is aware of this allocation, therefore, to improve secrecy
rates, we choose these matrices similar to the Case $\theta=1$.

\section{Simulation Results\label{sec:Simulation-Results}}
A 2D topology model is considered with $(0,0)$ location for Alice
and $(0,1)$ for Bob, while the default location for Eve is set at
$(1,1)$. Path loss exponent is 3 ($\alpha=3$). Rates are averaged
over 100 channel realizations. $P_{A}=P_{B}=25\text{ dB}$, unless
otherwise stated. The channel estimation error variance is assumed
to be $\sigma_{\Delta,BA}^{2}=\sigma_{\Delta,AB}^{2}=0.1$. Noise
variance is set to unity, $\sigma^{2}=1$. The MIMO setting is given as $N_{A}=N_{B}=\frac{N_{E}}{2}=2b=4$. Other parameters depend on the considered scenarios, which are given as follow.
\begin{itemize}
	\item Fixed power allocation, known precoder at Eve, and known AN at Alice and Bob: $\gamma=0.8$, $\kappa=0$, and $\theta=0$.
	\item Known $\beta_{Ei}$, known precoder at Eve, and known AN at Alice and Bob: $\kappa=0$, $\theta=0$, and $\xi=0.9$ for ANs' power allocation.
	\item Known $\beta_{Ei}$, known precoder at Eve, and unknown AN at Alice and Bob: $\kappa=0$, $\theta=1$, and $\xi=0.9$ for ANs' power allocation.
	\item Known $\beta_{Ei}$, partial precoder at Eve, and known AN at Alice and Bob: $\kappa=0.1$, $\theta=0$, and $\xi=0.9$ for ANs' power allocation.
	\item Known $\beta_{Ei}$, partial precoder at Eve, and unknown AN at Alice and Bob: $\kappa=0.1$, $\theta=1$, and $\xi=0.9$ for ANs' power allocation.
	\item Unknown $\beta_{Ei}$, and known precoder at Eve: $\gamma = 1$, and $\kappa = 0$.
	\item Unknown $\beta_{Ei}$, and partial precoder at Eve: $\gamma = 1$, and $\kappa = 0.1$.
\end{itemize}
In cases of unknown $\beta_{Ei}$, Alice and Bob choose to allocate power to maximize their rates,
that is, $\gamma_{i}=1$. For fixed allocation, $\gamma_{i}=\gamma$
and $\kappa_{iE}=\kappa$ are used.

\subsection{IBFD gain on secrecy rate}
To evaluate the potential of IBFD radios to improve the secrecy rate, Figure \ref{fig:sumSR} compares the achievable secrecy rates of HD and IBFD with equal power allocation and partial precoder available at Eve ($\kappa=0.1$). We reflect the reciprocity level by channel uncertainty in simulations and assume perfect SIC applied ($\eta=0$).
It shows that IBFD can generally improve the secrecy rate when the artificial noise is known ($\theta=0$), and the improvement increases with increasing channel uncertainty. For instance, if the interval between the two probing of HD is long, the channel reciprocity is reduced, while IBFD is free of such concerns.
\begin{figure*}
	\centering
	\subfigure[identical channel uncertainty for IBFD and HD]{
		\begin{minipage}[t]{0.5\linewidth}
			\centering
			\includegraphics[width=1.0\linewidth]{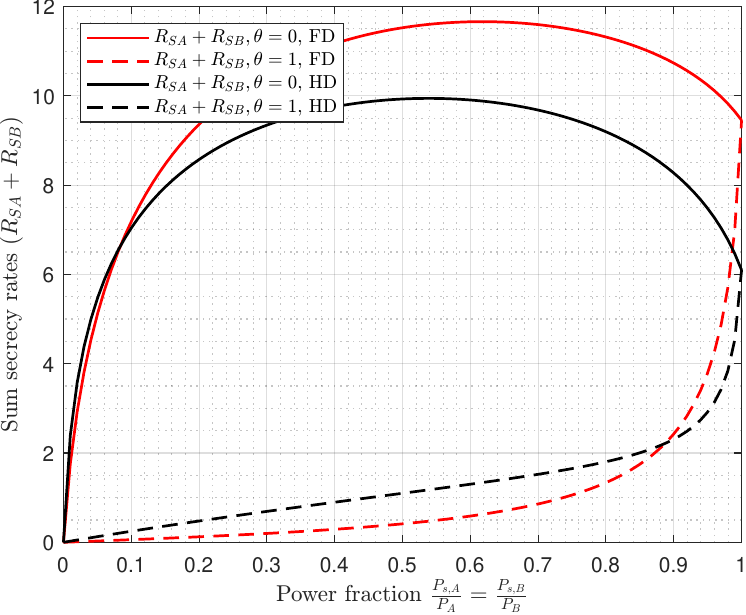}
		\end{minipage}%
	}%
	\subfigure[HD has 1.5 times higher channel uncertainty than IBFD]{
		\begin{minipage}[t]{0.5\linewidth}
			\centering
			\includegraphics[width=1.0\linewidth]{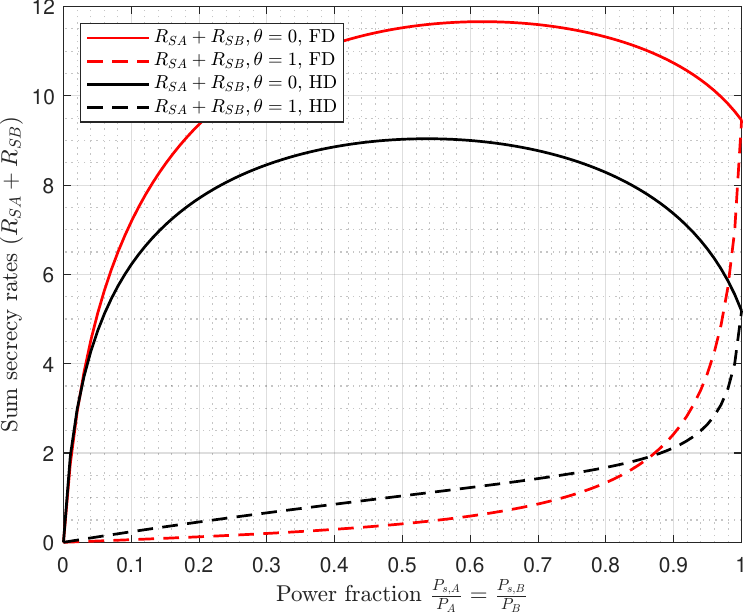} 
		\end{minipage}%
	}%
	\centering
	\caption{Sum secrecy rates with $P_{s,i}=1-P_{w,i}-P_{\underline{w},i}$, $P_{w,i}=P_{\underline{w},i}$.\label{fig:sumSR}}
\end{figure*}

\subsection{Sum secrecy rates versus RSICs $\left(\eta\right)$}
The performance of IBFD radios is subjected to the self-interference cancellation depths. To illustrate its effects on the secrecy rates for different cases, Figure \ref{fig:sum-eta} shows secrecy rates versus the strength of residual SI channels $(\eta)$.
It can be observed that as $\eta$
is increased, secrecy rates are decreased. Secrecy rates with unknown
AN (red and blue dash-dotted lines) are close to zero when $\eta>0.5$
approximately. With $\gamma=1$ constant cases (black lines), secrecy
rates are positive and saturate when $\eta$ is high. On the other
hand, with known AN (red and blue solid lines), optimum power allocation
$(\gamma\neq1)$ provides decreasing rates with increase in $\eta$.
It can also be noted that the plot with $\gamma=0.8$ (green dashed
line) is above other plots with known AN and power allocation. This
is due to the fact that the power allocation is selected based on
ergodic rate approximation, whereas the rates are plotted and averaged
with instantaneous channel estimates. In other words, for $\eta=2$,
$\gamma=0.8$, $\tilde{R}_{Ei}<\tilde{R}_{ji}$; however, $\mathbb{E}\left\{ R_{Si}\right\} >0$.
This shows the limitation of the approximation due to the unavailability
of Eve's CSI at Alice and Bob. 
\begin{figure}
	\centering \includegraphics[width=1.0\columnwidth]{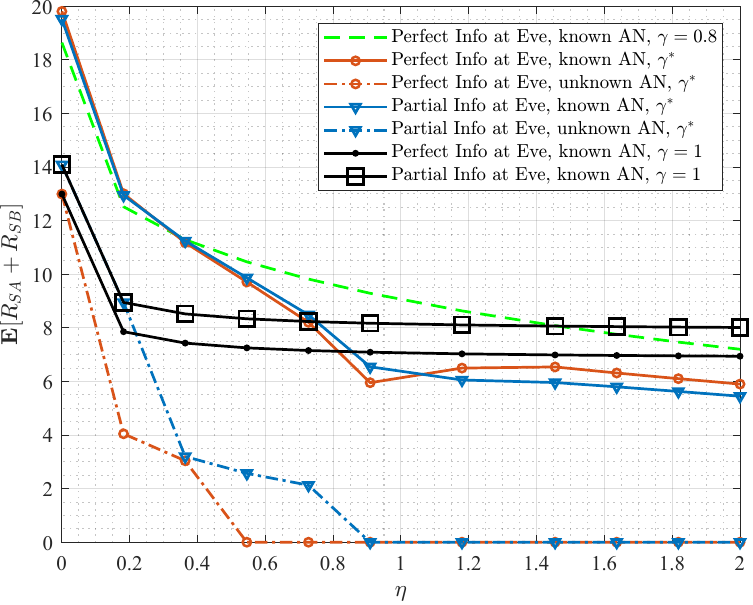}
	\caption{Averaged sum secrecy rates versus strength of residual self-interference channels for different cases. \label{fig:sum-eta}}
\end{figure}

\subsection{Suboptimal fine power allocation}
To evaluate the suboptimal fine power allocation policies, Figure \ref{fig:Eve-rates-fine} plots Eve's rates $\left(R_{EA}\right)$ averaged over 100 realizations for fixed $\gamma=0.5$ and varying fraction for the two artificial noises $P_{w,i}$ and $P_{\underline{w},i}$.
The trend for $R_{EB}$ is similar to $R_{EA}$, so it is omitted here. The fraction $0\leq\xi\leq1$ regulates the ANs' power $(1-\gamma_{i})P_i$
between $P_{w,i}$ and $P_{\underline{w},i}$, that is, $\frac{P_{\underline{w},i}}{P_{w,i}}=\frac{1-\xi}{\xi}$.
It can be observed that allocations for $\mathbf{P}_{s,i}$ given in Equation \eqref{eq:AN_power_allocation}
provides higher eavesdropping rates than the eigenvalue decomposition-based allocation given in Equation \eqref{eq:eig_power_allocation}. In this case, it is desired to allocate more power to the AN injected into the null space instead of the signal space to minimize the eavesdropping rates. With the eigenvalue decomposition-based allocation method, we can adaptively adjust the ANs' power ratio $\xi$ depending on the true conditions.
\begin{figure}
	\centering \includegraphics[width=1.0\columnwidth]{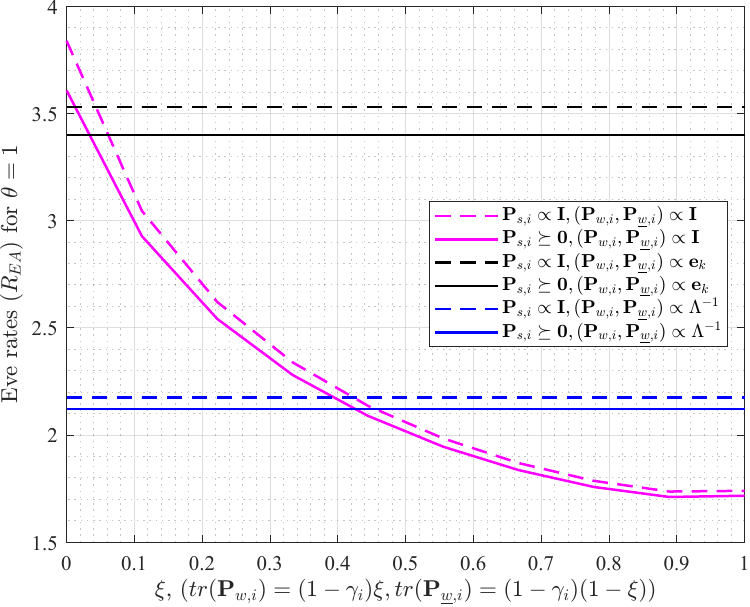}
	\caption{Eve rates $\left(R_{EA}\right)$ for different fine allocation methods with $\theta=1$ (Alice $(0,0)$, Bob $(0,1)$ and Eve $(0.5,5)$).\label{fig:Eve-rates-fine}}
\end{figure}

\subsection{Sum secrecy rates with different Eve's locations }
Figure \ref{fig:sum-x} plots averaged sum secrecy rates for different
scenarios mentioned above while varying $x$-location of Eve. 
\begin{figure}
	\centering \includegraphics[width=1.0\columnwidth]{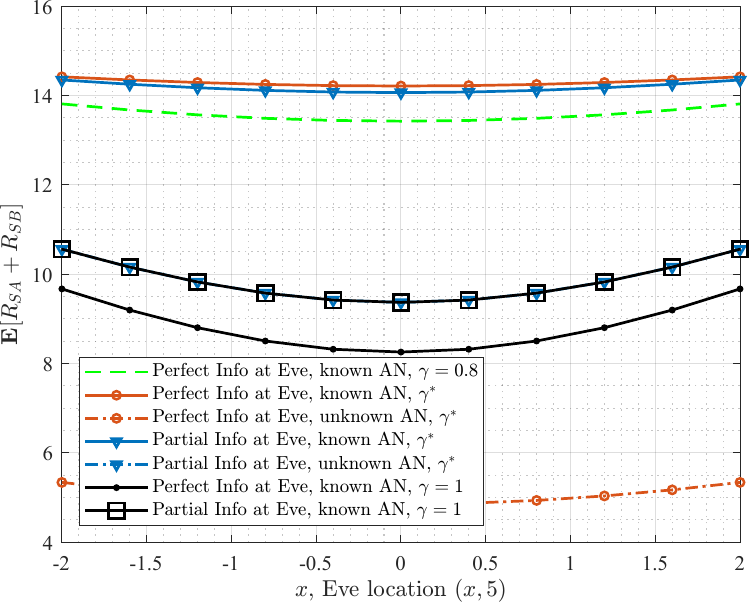}
	\caption{Averaged sum secrecy rates versus different $x$-locations of Eve $\left(x,5\right)$ for different cases. \label{fig:sum-x}}
\end{figure}
The following points can be observed. 
\begin{itemize}
	\item Unknown AN $\theta=1$ (dash-dotted lines) reduces the
	rates at Alice and Bob, thereby reducing secrecy rates, as compared
	to the case of known AN $\theta=0$ (solid lines). 
	\item The curve with a constant $\gamma=0.8$ with known AN lies in the middle
	of these curves. This is plotted for comparison.
	\item For known AN ($\theta=0$), secrecy rates are proportional
	to $\kappa$, that is, $R_{Si}\propto\log\left(\kappa \upsilon_{5}+\upsilon_{6}\right)\gamma_{i}$
	(approximately), where $\upsilon_{k}$ are some constants. It shows that
	maximizing secrecy rates allocates more power to the signal when $\kappa$ is increased,
	that is, $\gamma_{i}^{*}\big|_{\kappa=0=\theta}>\gamma_{i}^{*}\big|_{\kappa>0=\theta}$.
	\item For unknown AN ($\theta=1$), Eve's rates are reduced due to partial information, leading to higher secrecy rates (red and blue dash-dotted
	lines). It is due to the fact that at $\theta=1$, $\gamma=1$ is
	the most likely solution, that is, $\gamma_{i}^{*}\big|_{\kappa=0=1-\theta}=1$.
	\item When $\gamma=1$ is constant for $\theta=0$ (black lines), increasing
	$\kappa$ (i.e., reducing partial information at Eve) reduces Eve's rates, in turn, improves secrecy rates. Since for constant $\gamma$,
	Eve's rates remain unchanged for known/unknown AN ($\theta=0,1$),
	so only $\theta=0$ results are plotted. 
\end{itemize}

\subsection{Sum secrecy rates versus transmit power}
Figure \ref{fig:sum-txP} depicts variations of sum secrecy rates
with respect to transmit power for different cases. 
\begin{figure}
	\centering \includegraphics[width=1.0\columnwidth]{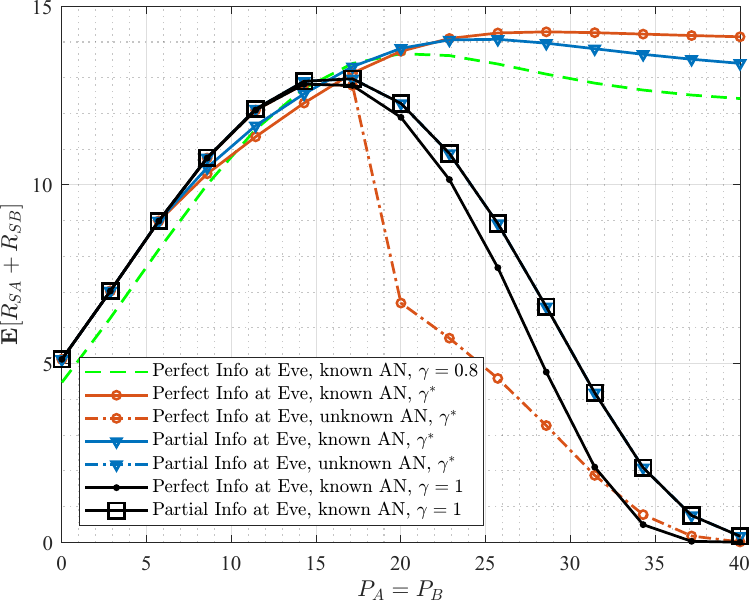}
	\caption{Averaged sum secrecy rates versus transmit SNR for different cases.\label{fig:sum-txP}}
\end{figure}
It can be seen
that secrecy rates achieve the maximum at first but go down to zero in the high-SNR region for some
cases. This is due to the fact that at high SNRs,
the power allocation algorithm provides $\gamma=1$, which suggests stopping transmitting the AN, yielding high eavesdropping rates, as described in Section \ref{subsec:High-SNR-regime}.
The cases that do not yield $\gamma=1$ include cases with ``known
AN'' (solid red and blue lines) and ``fixed power allocation''
(dashed-green line). The rates provided by these cases saturate at
high SNRs, which is due to the reason that self-interference and channel
estimation errors limit the secrecy performance, which is also illustrated in \cite{luo2023channel}. The limited information and non-zero artificial noise power are limiting
factors for the eavesdropping rates, see equations (\ref{eq:highSNR_RBA}) and (\ref{eq:highSNR_REA}).

\section{Conclusion\label{sec:Conclusion} }
In this paper, we have designed an AN-based signal vector,
where the AN is injected in both the signal space and the null space. Thanks to the signal space AN, our design has the potential to further improve the secrecy rate in certain specialized scenarios. These scenarios are typically specialized and outside the scope of conventional communication systems. However, our power allocation scheme can adaptively reduce the power budget of signal space AN to zero for conventional communication systems based on the conditions, which becomes consistent with the typical design. Considering large-scale fading information of the eavesdropping channels available at Alice and
Bob, which could become feasible with the development of ISAC technology, a two-step power allocation has been proposed, which is based on
ergodic rate approximation. Secrecy region and power allocation have
been analyzed in known and unknown AN cases, and corresponding simulation
results have been demonstrated. The numerical results illustrate that unknown AN at legitimate users reduces
the secrecy rates, while the limited information at Eve could improve the secrecy rates.
Due to the lack of availability of Eve's CSI at Alice and Bob, the resulting
power allocation has achieved ergodic secrecy rates close to their approximations.

\section*{Acknowledgment}
This work was supported in part by Huawei Technologies Canada Company Ltd. The work of Tharmalingam Ratnarajah was supported by the U.K. Engineering and Physical Sciences Research Council (EPSRC) under Grant EP/T021063/1.

\appendices

\section{Proof of Lemma \ref{lem:MMSE-rate}\label{subsec:Proof-of-Lemma1}}
\begin{proof}
	Proof can be given as 
	\begin{align*}
	     \mathbb{E}\left \{ R \right \} & =\mathbb{E}\left \{ \log_{2}\left|\mathbf{I}+\mathbf{H}\mathbf{C}_{x}\mathbf{H}^{H}\mathbf{C}_{z}^{-1}\right|   \right \} \\
	     & \stackrel{(a)}{\leq}\log_{2}\left|\mathbf{I}+\mathbf{C}_{x}\mathbb{E}\left\{ \mathbf{H}^{H}\mathbb{E}\left\{ \mathbf{C}_{z}^{-1}\right\} \mathbf{H}\right\} \right|\\
		& \stackrel{(b)}{\approx}\log_{2}\left|\mathbf{I}+\mathbf{C}_{x}\mathbb{E}\left\{ \mathbf{H}^{H}\mathbb{E}\left\{ \mathbf{C}_{z}\right\} ^{-1}\mathbf{H}\right\} \right|,
	\end{align*}
	where $(a)$ is obtained using Jensen's inequality; and in $(b)$,
	$\mathbb{E}\left\{ \mathbf{C}_{z}^{-1}\right\} \approx\mathbb{E}\left\{ \mathbf{C}_{z}\right\} ^{-1}$
	is used. 
\end{proof}

\section{Proof of Corollary \ref{cor:ergodicrate}\label{subsec:Proof-of-Corollary2}}
\begin{proof}
	To avoid the redundancy, the proof is only presented for Eve below
	as (the steps for Alice and Bob can be followed similarly) %
	\begin{align}
		\mathbb{E}\left \{ R_{E} \right \} & \stackrel{(a)}{\approx}\log_{2}\left|\mathbf{I}+\mathbb{E}\left\{ \mathbf{P}_{s}\hat{\mathbf{V}}_{E}^{H}\mathbf{H}_{E}^{H}\mathbb{E}\left\{ \mathbf{C}_{E}\right\} ^{-1}\mathbf{H}_{E}\hat{\mathbf{V}}_{E}\right\} \right| \\
		& \stackrel{(b)}{=}\log_{2}\left|\mathbf{I}+\frac{\mathbb{E}\left\{ \mathbf{P}_{s}\hat{\mathbf{V}}_{E}^{H}\mathbf{H}_{E}^{H}\mathbf{H}_{E}\hat{\mathbf{V}}_{E}\right\} }{c_{E}}\right|\\
		& \stackrel{(c)}{=}\log_{2}\left|\mathbf{I}+\frac{N_{E}\mathcal{BD}\left(\mathbf{P}_{s,A}\beta_{EA},\mathbf{P}_{s,B}\beta_{EB}\right)}{c_{E}}\right|\\
		& \stackrel{(d)}{\approx}\log_{2}\left|\mathbf{I}+\frac{N_{E}\mathcal{BD}\left(\frac{P_{s,A}}{d}\mathbf{I}\beta_{EA},\frac{P_{s,B}}{d}\mathbf{I}\beta_{EB}\right)}{c_{E}}\right|,
	\end{align}
	where $(a)$ is obtained via (\ref{eq:lem-eqn}); in $(b)$ and $(c)$,
	$\mathbb{E}_{H}\mathbf{C}_{E}=c_{E}\mathbf{I}$ and the numerator
	is simplified; in $(d)$, the value $\mathbf{P}_{s,i}=\frac{P_{s,i}}{d}\mathbf{I}$
	is substituted; and simplifying the determinant provides the expression
	in (\ref{eq:RE-appr}) as $\left|\mathcal{BD}\left(\alpha_{1}\mathbf{I}_{d},\alpha_{2}\mathbf{I}_{d}\right)\right|=\alpha_{1}^{d}\alpha_{2}^{d}$. 
\end{proof}

\section{Quadratic constraint equations \label{subsec:Quadratic-cons-equation}}
In this appendix, we give the detailed derivation of the quadratic constraint equations as follows, where $\mathbf{e}_{i}$
denotes the $i^{th}$ column of the identity matrix.
\begin{align}
	    c_{BA} & =\left(P_{s,B}+P_{w,B}+P_{\underline{w},B}\right)\eta+\theta\beta_{BA}\left(P_{w,A}+P_{\underline{w},A}\right)\nonumber \\
	    &\quad +\sigma_{\Delta,BA}^{2}\left(P_{s,A}+P_{w,A}+P_{\underline{w},A}\right)+\sigma^{2} \nonumber \\
		& =\eta P_{B}+\theta\beta_{BA}\left(1-\gamma_{A}\right)P_{A}+\sigma_{\Delta,BA}^{2}P_{A}+\sigma^{2} \nonumber \\
		& =\left[\begin{array}{c}
			\underbrace{-\theta\beta_{BA}P_{A}}_{a_{BA}} \\
			0
		\end{array}\right]^{T}\left[\begin{array}{c}
			\gamma_{A}\\
			\gamma_{B}
		\end{array}\right] \nonumber \\
		&\quad +\underbrace{P_{B}\eta+\theta\beta_{BA}P_{A}+\sigma_{\Delta,BA}^{2}P_{A}+\sigma^{2}}_{g_{BA}} \nonumber \\
		& =a_{BA}\gamma_{A}+g_{BA} = a_{BA}\mathbf{e}_{1}^{T}\mathbf{u}+g_{BA} .
\end{align}
\begin{align}
		c_{AB} & =\left(P_{s,A}+P_{w,A}+P_{\underline{w},A}\right)\eta+\theta\beta_{AB}\left(P_{w,B}+P_{\underline{w},B}\right)\nonumber \\
		&\quad +\sigma_{\Delta,AB}^{2}\left(P_{s,B}+P_{w,B}+P_{\underline{w},B}\right)+\sigma^{2} \nonumber \\
		& =P_{A}\eta+\theta\beta_{AB}\left(1-\gamma_{B}\right)P_{B}+\sigma_{\Delta,AB}^{2}P_{B}+\sigma^{2} \nonumber \\
		& =\left[\begin{array}{c}
			0\\
			\underbrace{-\theta\beta_{AB}P_{B}}_{a_{AB}}
		\end{array}\right]^{T}\left[\begin{array}{c}
			\gamma_{A}\\
			\gamma_{B}
		\end{array}\right]\nonumber \\
		&\quad +\underbrace{P_{A}\eta+\theta\beta_{AB}P_{B}+\sigma_{\Delta,AB}^{2}P_{B}+\sigma^{2}}_{g_{AB}} \nonumber \\
		& =a_{AB}\gamma_{B}+g_{AB}=a_{AB}\mathbf{e}_{2}^{T}\mathbf{u}+g_{AB}.
\end{align}
\begin{align}
		c_{E} & = \beta_{EA}\left(P_{s,A}\kappa_{A,E}+P_{w,A}+P_{\underline{w},A}\right)\nonumber \\
		&\quad+\beta_{EB}\left(P_{s,B}\kappa_{B,E}+P_{w,B}+P_{\underline{w},B}\right)+\sigma^{2} \nonumber \\
		& =\beta_{EA}\left(\gamma_{A}\kappa_{A,E}+1-\gamma_{A}\right)P_{A}\nonumber \\
		&\quad +\beta_{EB}\left(\gamma_{B}\kappa_{B,E}+1-\gamma_{B}\right)P_{B}+\sigma^{2} \nonumber \\
		& =\left[\begin{array}{c}
			\beta_{EA}\left(\kappa_{A,E}-1\right)P_{A}\\
			\beta_{EB}\left(\kappa_{B,E}-1\right)P_{B}
		\end{array}\right]^{T}\left[\begin{array}{c}
			\gamma_{A}\\
			\gamma_{B}
		\end{array}\right]\nonumber \\
		&\quad+\underbrace{\beta_{EA}P_{A}+\beta_{EB}P_{B}+\sigma^{2}}_{g_{E}}=\mathbf{a}_{E}^{T}\mathbf{u}+g_{E}.
\end{align}
\begin{align}
		c_{E}+ N_{E}\beta_{EA}\frac{P_{s,A}}{b}& =\left[\begin{array}{c}
			\beta_{EA}\left(\kappa_{A,E}-1\right)P_{A}+N_{E}\beta_{EA}\frac{P_{A}}{d}\\
			\beta_{EB}\left(\kappa_{B,E}-1\right)P_{B}
		\end{array}\right]^{T}\nonumber \\
		&\quad \left[\begin{array}{c}
			\gamma_{A}\\
			\gamma_{B}
		\end{array}\right]+\beta_{EA}P_{A}+\beta_{EB}P_{B}+\sigma^{2} \nonumber \\
		& =\mathbf{a}_{E}^{T}\mathbf{u}+\underbrace{N_{E}\beta_{EA}\frac{P_{A}}{b}}_{c_{EAB}}\gamma_{A}+ g_E \nonumber \\
		& = \mathbf{a}_{E}^{T}\mathbf{u}+ c_{EAB} \mathbf{e}_{1}^{T}\mathbf{u}+ g_E
		.
\end{align}

\ifCLASSOPTIONcaptionsoff
  \newpage
\fi

\bibliographystyle{ieeetr} 

\end{document}